\numberwithin{equation}{section}
\setlist[enumerate]{leftmargin=.5in}
\setlist[itemize]{leftmargin=.5in}
  \pgfplotsset{compat=newest}
\providecommand{\U}[1]{\protect\rule{.1in}{.1in}}
\newtheorem{theorem}{Theorem}[section]
\newtheorem{corollary}[theorem]{Corollary}
\newtheorem{definition}[theorem]{Definition}
\newtheorem{lemma}[theorem]{Lemma}
\newtheorem{remark}[theorem]{Remark}
\def \R{\mathbb R}
\def \E{\mathbb E}
\def \N{\mathbb N}
\def \P{\mathbb P}
\def \AA {\mathbf A}
\def \BB {\mathbf B}
\newcommand{\D}{\mathcal{D}}
\newcommand{\sign}{\mathrm{sign}}
\def\d{\mathrm{d}}
\newcommand{\indic}{\mathbf{1}}
\newcommand{\cG}{\mathcal{G}}
\newcommand{\ri}{\mathrm{i}}
\newcommand{\rj}{\mathrm{j}}
\newcommand{\rk}{\mathrm{k}}
\newlength{\temp@wip@width}
\newlength{\temp@wip@height}
\newcommand{\wideinvparen}[1]{%
  \vfuzz=30pt
  \setlength{\temp@wip@width}{\widthof{$#1$}}%
  \setlength{\temp@wip@height}{\heightof{$#1$}}%
  #1\hspace{-\temp@wip@width}%
  \raisebox{\temp@wip@height+1pt}[\heightof{$\wideparen{#1}$}]%
    {\rotatebox[origin=c]{180}{\vbox to 0pt{\hbox{$\wideparen{\hphantom{#1}}$}}}}%
}
\newlength{\dhatheight}
\newcommand{\email}[1]{\thanks{\texttt{#1}}}
\begin{document}
\title{A PDE approach for the invariant measure of stochastic oscillators with hysteresis}
\author[1]{Lihong Guo\email{lihguo2223@cityu.edu.cn}}
\author[2]{Harry L. F. Ip\email{harryip2-c@my.cityu.edu.hk}}
\author[2]{Mingyang Wang\email{mywang43-c@my.cityu.edu.hk}}
\affil[1]{\small\it City University of Hong Kong Shenzhen Research Institute, Shenzhen, China}
\affil[2]{\small\it Department of Mathematics, City University of Hong Kong, Hong Kong, China}

\date{ }
\maketitle
\vspace{-5em}
\begin{abstract}
This paper presents a PDE approach as an alternative to Monte Carlo simulations 
for computing the invariant measure of a white-noise-driven bilinear oscillator 
with hysteresis. This model is widely used in engineering to represent highly nonlinear dynamics, 
such as the Bauschinger effect. 
The study extends the stochastic elasto-plastic framework of Bensoussan et al. 
[SIAM J. Numer. Anal. 47 (2009), pp. 3374--3396] 
from the two-dimensional elasto-perfectly-plastic oscillator to the three-dimensional bilinear elasto-plastic oscillator. 
By constructing an appropriate Lyapunov function, 
the existence of an invariant measure is established. 
This extension thus enables the modelling of richer hysteretic behavior 
and broadens the scope of PDE alternatives to Monte Carlo methods. 
Two applications demonstrate the method's efficiency: 
calculating the oscillator's threshold crossing frequency (providing an alternative to Rice's formula) and probability of serviceability. 
\end{abstract}

\section{Introduction}\label{intro}
In this paper, we study the following system of the white-noise-driven bilinear elasto-plastic oscillator (BEPO)  
\begin{equation}\label{Intro_BEPO}
	\ddot{X}(t) + \mathbf{F}(t) 
	= \mathfrak{f}(X(t),\dot{X}(t))+ \sigma \dot{W}(t), 
\end{equation}
with initial condition $(X(0),\dot{X}(0)) = (x,y)\in\R^{2}$. 
Here, $X(t) \in \R$ is the state variable at time $t$, 
dots denote derivatives with respect to $t$, 
$\mathbf{F}(t)$ is the bilinear hysteretic restoring force,  
$\mathfrak{f}: \R^{2} \rightarrow \R$ is a Lipschitz function,  
representing all the non-elastoplastic deterministic forces, 
$\dot{W}$ represents a white noise random forcing where $W$ is a Wiener process, 
and $\sigma$ is the noise intensity.  
Our work is twofold: first, to establish the existence of an invariant measure for system (\ref{Intro_BEPO}) 
under reasonable assumptions on $\mathfrak{f}$, 
and second, to propose a computational method for evaluating its measure, 
based on its backward Kolmogorov equation, 
as an alternative to classical Monte Carlo approaches.

In stochastic structural mechanics,  
the white-noise-driven BEPO is a fundamental model for elastoplastic materials under random excitation 
\cite{C60,NM96,R78,S79,YH87}. 
It serves as an extension of the elasto-perfectly-plastic oscillator (EPPO), the simplest and most widely studied model 
\cite{F08,BT08,BT10,BMPT09,FM12,BJFMM12,BM12,bensoussan2012long,JFMY14,BFMY15,lauriere2015penalization,
	mertz2015degenerate,BMY16,LM16,lauriere2019free, MSW19,lauriere2019penalization,mertz2019numerical,
	mertz2024exponential,ip2025control}. 
The key distinction between these two models lies in the restoring force $\mathbf{F}(t)$. 
For the EPPO, $\mathbf{F}(t)$ depends linearly on elastic deformation, 
and the system state is fully described by its velocity and elastic deformation. 
In contrast, 
BEPO exhibits a bilinear dependence on both elastic and total deformation. 
The latter becomes an additional governing variable in its dynamics, 
extending the EPPO framework to a three-dimensional state space. 
The constitutive models are shown in Fig. \ref{fig:model}.  
\begin{figure}[h]
	\centering
	\captionsetup{width=14cm}
	\captionsetup{font= footnotesize}
	\includegraphics[width=5.5in]{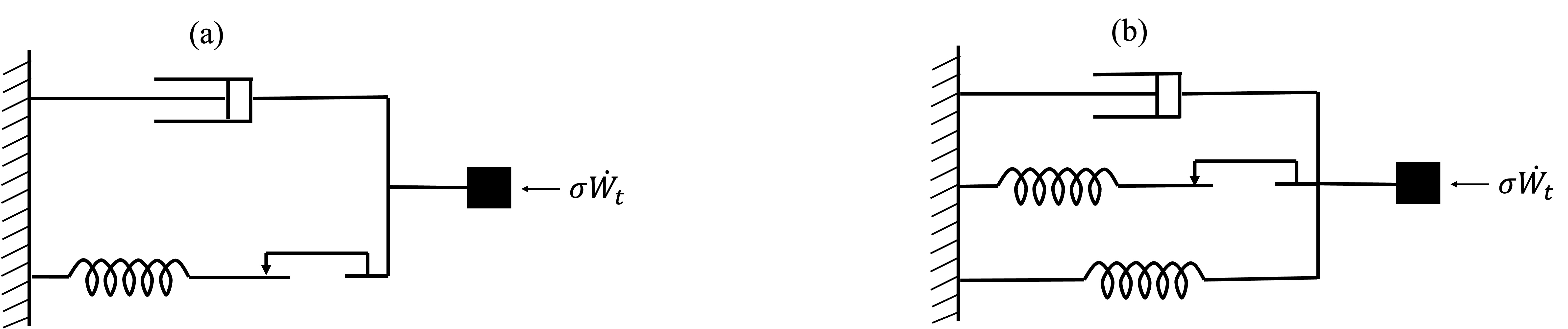}
	\caption{\footnotesize Constitutive models. 
		(a) An EPPO model that contains a linear mass, a dashpot, and a spring connected in series with a Coulomb friction-slip joint.
		(b) A BEPO model, with a spring connected in parallel based on the EPPO model. 
		The two models are excited by a time-dependent random force $\sigma \dot{W}_{t}$, 
		where $\dot{W}_t$ is a white noise, $\sigma$ is the noise intensity.} 
	\label{fig:model}
\end{figure}

The invariant probability measure of stochastic dynamical systems is 
central to assessing structural reliability under random excitations, 
especially in systems that exhibit inelastic behavior.  
A foundational contribution was made in \cite{F08,BT08,BT10} through a stochastic variational inequality (SVI) framework 
for the white-noise-driven EPPO system. 
This formulation obviates the need for phase-wise descriptions of trajectories, 
thereby facilitating the proofs of the existence and uniqueness of the invariant measure.

Building on this foundation, 
subsequent research has extended the theory in several directions. 
New analytical proofs for the existence and uniqueness of the invariant measure were presented in \cite{BM12,mertz2024exponential},  
while an approximate solution to the SVI was developed in \cite{BJFMM12,JFMY14,lauriere2015penalization}. 
Further properties of the stochastic EPPO were derived in \cite{bensoussan2012long,BFMY15}, 
showing that the variance of plastic deformation grows linearly over time, 
and in \cite{BFMY15}, which characterized its probability distribution analytically. 
Applied studies include asymptotic failure risk estimates \cite{LM16} and 
the use of Rice's formula for plastic deformation statistics \cite{FM12}. 
The model has been extended to the systems driven by filtered white noise \cite{mertz2015degenerate,BMY16} and 
colored noise \cite{lauriere2019penalization,ip2025control}, 
with the invariant measure similarly characterized.

However, two significant theoretical challenges remain. 
(1) The existence and uniqueness of a steady-state for the aforementioned 
white-noise-driven BEPO system have not been mathematically established.
Nonetheless, it is empirically supported by stochastic simulations \cite{ip2025control}. 
(2) The curse of dimensionality. 
Numerical computation of invariant measures typically uses either Monte Carlo simulation or PDE methods. 
Common numerical approaches for solving PDEs include, 
but are not limited to, 
the finite difference method \cite{lauriere2019free,MSW19,zhong2024analysis}, 
the finite element method \cite{BMPT09,mertz2019numerical,qiu2020note,gao2021pointwise}, 
the discontinuous Galerkin method \cite{li2022convergent}, 
and the finite volume method \cite{wu2023unconditionally}. 
However, the PDEs studied in this paper contain a small parameter $\lambda$, 
which introduces a boundary layer of width $O(\lambda)$ and consequently requires extremely fine meshes. 
In three dimensions, this leads to prohibitively high computational costs, 
exceeding current practical limits.

In this paper, 
we introduce an SVI modelling system (\ref{Intro_BEPO}) and 
support the existence of a steady state by constructing a Lyapunov function. 
We further develop a numerical PDE method as an alternative to probabilistic simulations. 
This approach is particularly suitable for materials science researchers interested in stationary statistics of elastoplastic models. 
The steady-state analysis offers a profound insight into the system's long-term dynamics, 
enabling the extraction of key quantitative metrics, 
such as the threshold-crossing frequencies and 
the limit-state serviceability probability.

The paper is organized as follows. 
The physical and mathematical formulations of the model are discussed in Section \ref{sec:02}. 
The theoretical analysis of the steady-state is presented in Section \ref{sec:invariantmeasure}. 
Section \ref{sec:num-method} describes the numerical method. 
Two illustrative examples and their numerical results are provided in  Section \ref{sec:num-results}, 
followed by the conclusions in Section \ref{sec:conclusions}.

\section{Formulations of the model}\label{sec:02}
To illustrate the BEPO model formulations from both physical and mathematical perspectives, 
we introduce the following notation. 
Let 
\begin{align*}
	\D = \R\times \R\times (-b,b), \hspace{4em}&\bar{\D} = \R\times \R\times [-b,b],\\
	\D^{-} = \R\times \R\times \{-b\},   \hspace{4em}&\D^{+} = \R\times \R\times \{b\},\quad b>0,\\
	\mathbf{X}(t) = (X(t), Y(t), Z(t)),   \hspace{2em}&\mathbf{x} = (x, y, z). 
\end{align*}
Additionally, let $C_{b}(\bar{\mathcal{D}})$ represent the space of continuous and bounded functions on 
$(\bar{\mathcal{D}}, \mathcal{B})$, 
where $\mathcal{B}$ denotes the Borel $\sigma$-field on $\bar{\mathcal{D}}$. 

\subsection{Physical formulation of the BEPO}\label{phys:BEPO}
As mentioned before, the white-noise-driven BEPO is formally described by 
\begin{align}
	\ddot{X}(t) + \mathbf{F}(t)
	= \mathfrak{f}(X(t),\dot{X}(t))+ \sigma \dot{W}(t), \label{bepo1}
\end{align}
with the initial displacement and velocity 
$X(0) = x$ and $\dot{X}(0) = y$ respectively. 
Here,  
$\dot{W}$ denotes a white noise random forcing in 
the sense that $W$ is a Wiener process, $\sigma>0$ is the noise intensity, and
$\mathfrak{f}(x,y)$ is a Lipschitz function from $\R^{2}$ to $\R$,  representing all non-elastoplastic deterministic forces.
The restoring force $\mathbf{F}(t)$ is a functional that depends on 
the entire trajectory $\{ X(s), 0 \leq s \leq t \}$ up to time $t$, 
its non-linearity comes from the switching of regimes from 
an elastic phase to a plastic one, or vice versa. 
In this model, the irreversible (plastic) deformation $\Delta$ and 
the reversible (elastic) deformation $Z$ at time $t$ satisfy  

\begin{eqnarray*}
	\d Z(t) 
	& = \d X (t), \quad \d \Delta(t) 
	& = 0, \quad \mbox{in elastic phase},\\
	\d \Delta(t) 
	& = \d X (t), \quad \d Z(t) 
	& = 0, \quad \mbox{in plastic phase},
\end{eqnarray*}
while $X(t) = Z(t) + \Delta(t)$. 
Typically, plastic phase occurs when $ \vert Z(t) \vert = b$ and 
elastic phase occurs when $\vert Z(t) \vert < b$, 
here $b>0$ is an elasto-plastic bound. 
We will consider the case in which the force $\mathbf{F}(t)$ is 
a linear combination of the elastic and plastic deformations. 
This force is a linear function of $Z$ 
(while $\Delta = X -Z$ remains constant) in the elastic phase and 
a linear function of $X$ (while $Z$ remains constant at $\pm b$) in the plastic phase as follows:
\begin{equation}\label{bepo2}
	\mathbf{F}(t) 
	= k (1-\alpha) Z(t) + k \alpha X(t), \quad 0 \leq \alpha \leq 1, \quad k > 0.
\end{equation} 
Note that when $\alpha = 0$, the system \eqref{bepo1}--\eqref{bepo2} reduces to the classical EPPO model.

\subsection{Mathematical formulation of the BEPO}
Consider the triple 
$\{ (X(t),Y(t),Z(t)) , \: t \geq 0 \}$ 
solution of the following SVI
\vspace{-0.5em}
\begin{equation}\label{SVIBEPO}
	\begin{cases}
		\d X(t) =  Y(t)\d t,\\
		\d Y(t) = [\mathfrak{f}(X(t),Y(t)) - k (1-\alpha) Z(t) - k \alpha X(t)]\d t + \sigma  \d W(t),\\
		(\d Z(t) - Y(t)\d t)(\xi - Z(t)) \geq 0, ~\forall |\xi| \leq b, ~|Z(t)| \leq b,  
	\end{cases}
\end{equation}
with initial condition 
$(X(0),Y(0),Z(0)) = (x,y,z) \in \bar{\D}$.
The SVI \eqref{SVIBEPO} is a concise and rigorous way to express the elasto-plastic dynamics, 
shown in Section \ref{phys:BEPO}, as it governs the alternance between elastic and plastic regimes.
It is a well-posed problem, see \cite{PR14} Chapter 4.

\subsection{Reformulation with stopping times}
Without loss of generality, 
we take $\mathbf{x} \in \D$ as an example to illustrate the process generated by SVI \eqref{SVIBEPO}. 
Recall from \cite{BT08} and define two sets of stopping times as follows: 
\begin{align}
	\begin{cases}
		\theta_{n+1} = \inf\{t>\tau_{n}~ | ~ |Z(t)| = b \},\\
		\tau_{n+1} = \inf\{t > \theta_{n+1}~|~ \sign (Y(t))  = - \sign(Z(t))\},
	\end{cases}\label{stopping_time}
\end{align}
where $\tau_{0} = \theta_{0} = 0$ and $n\in\N_{0}$.  
To show the behavior of the restoring force $\mathbf{F}(t)$ with respect to $X(t)$, 
we formally draw the evolution curve according to the stopping time in  Fig. \ref{fig:phases}.

\begin{remark}
	$(\theta_{n})_{n\geq 1}$ corresponds to each entry in the plastic regime, 
	we use the indicator $|Z(t)| = b$ to describe this behavior.  
	$(\tau_{n})_{n\geq 1}$ corresponds to each exit of the plastic regime. 
	There are at least two indicators that can describe this behavior: 
	$|Z(t)| < b$ and the variation of $\sign(Y(t))$.  We choose to use the latter one in sets \eqref{stopping_time}. 
\end{remark}

\begin{figure}[htpb]
	\centering
	\captionsetup{width=14cm}
	\captionsetup{font= footnotesize}
	\includegraphics[width=6.5in]{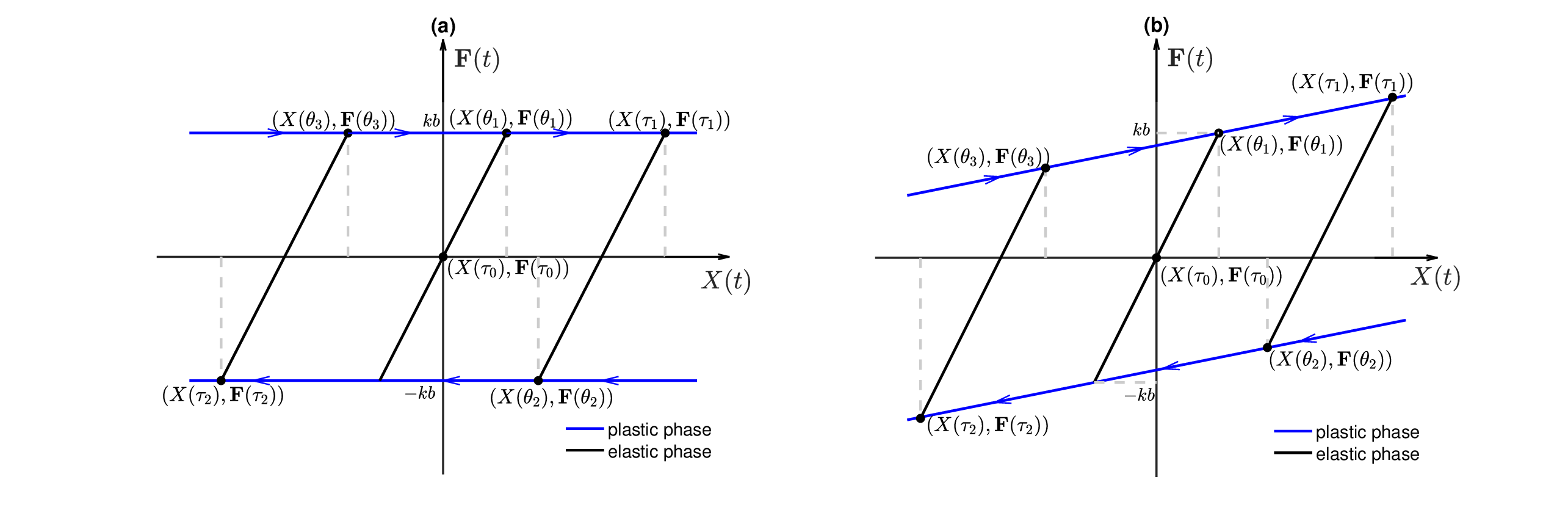}
	\caption{\footnotesize Archetypal evolution of the restoring force $\mathbf{F}(t)$  versus $X(t)$. 
		(a) The EPPO model. In the plastic phase, $\mathbf{F}(t)$ remains a constant.   
		(b) The BEPO model. In the plastic phase,  $\mathbf{F}(t)$ is a linear function. } 
	\label{fig:phases}
\end{figure}

When $t\in(\tau_{n}, \theta_{n+1})$, we have $|Z(t)| <b$ and 
\begin{align*}
	\begin{cases}
		\d X(t) = Y(t)\d t,\\
		\d Y(t) = [\mathfrak{f}(X(t), Y(t)) -k(1-\alpha)Z(t) - k\alpha X(t)]\d t + \sigma \d W(t),\\
		\d Z(t) = Y(t)\d t. 
	\end{cases}
\end{align*}
When $t\in (\theta_{n+1}, \tau_{n+1})$, we have $|Z(t)| = b$, $\sign(Z(t))Y(t) \geq 0$, 
and
\begin{align*}
	\begin{cases}
		\d X(t) = Y(t)\d t,\\
		\d Y(t) = [\mathfrak{f}(X(t), Y(t)) -k(1-\alpha)\sign(Z(t))b - k\alpha X(t)]\d t + \sigma \d W(t),\\
		\d Z(t) = 0. 
	\end{cases}
\end{align*}

\section{Invariant probability measure\label{sec:invariantmeasure}}
In this section, 
we first derive the backward Kolmogorov equations, 
then we reveal the existence of the invariant probability measure. 
Moreover, we provide an approximation to this measure. 
As the proofs are classic, we place them in Appendix \ref{app-A01} for the reader's convenience.

\subsection{Backward Kolmogorov equations}
Let $\beta(x,y,z) = \mathfrak{f} (x, y) - k(1-\alpha) z -k\alpha x$, 
define the operators $\AA$, $\BB_{+}$, and $\BB_{-}$ by
\begin{align*}
	&\AA  \triangleq \dfrac{\sigma^{2}}{2}\partial^{2}_{y}
	+ y \partial_{x}
	+ \beta(x,y,z)\partial_{y}
	+ y\partial_{z},\\
	&\BB_{+} \triangleq \dfrac{\sigma^{2}}{2}\partial^{2}_{y}
	+ y \partial_{x}
	+ \beta(x,y,b)\partial_{y} + \min(0,y)\partial_{z},\\
	&\BB_{-}  \triangleq \dfrac{\sigma^{2}}{2}\partial^{2}_{y}
	+ y \partial_{x}
	+ \beta(x,y,-b)\partial_{y}+ \max(0,y)\partial_{z}. 
\end{align*}

\begin{theorem}\label{Prop01_BEPO}
	Let $\mathbf{X}$ solve SVI \eqref{SVIBEPO}, 
	let $g\in L^{2}(\bar{\D})$, and consider
	\begin{align}\label{SVI-sol01}
		\psi(t, \mathbf{x}) = \E[g(\mathbf{X}(t)) | \mathbf{X}(0) = \mathbf{x}], 
	\end{align}
	then
	\begin{align}
		\begin{cases}\label{SVI-eq01}
			\partial_{t}\psi - \AA \psi = 0,~~\text{in~}(0,T]\times \D,\\
			\partial_{t}\psi - \BB_{\pm} \psi = 0,~~\text{in~}(0,T]\times \D^{\pm},\\
			\psi(0,\mathbf{x}) = g(\mathbf{x}). 
		\end{cases}
	\end{align}
	
\end{theorem}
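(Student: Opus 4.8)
The plan is to identify the generator of the Markov process $\mathbf{X}$ solving \eqref{SVIBEPO} on the three pieces of the state space $\bar{\D} = \D \cup \D^+ \cup \D^-$, and then to apply the standard Kolmogorov argument to the semigroup $P_t g(\mathbf{x}) = \E[g(\mathbf{X}(t))\mid\mathbf{X}(0)=\mathbf{x}]$. First I would recall (from the reformulation with stopping times in Section 2.3) that while $|Z(t)| < b$ the dynamics are the genuine SDE with drift $\beta(x,y,z)$ in the $y$-component and $\d Z = Y\,\d t$; on $\D^+$ (where $Z \equiv b$) the SVI constraint forces $\d Z = \min(0,Y)\,\d t$, since $Z$ can only decrease away from the upper obstacle, and symmetrically $\d Z = \max(0,Y)\,\d t$ on $\D^-$. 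This is exactly why the $z$-transport coefficient in $\AA$ is $y$, in $\BB_+$ is $\min(0,y)$, and in $\BB_-$ is $\max(0,y)$; matching the $y$-diffusion and the $x,y$-drift terms across all three operators is immediate from the first two lines of \eqref{SVIBEPO}, which are unaffected by the obstacle.

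Next I would write Dynkin's formula for a suitably smooth test function $\varphi$ on $\bar{\D}$: for $\mathbf{x} \in \D$, $\E[\varphi(\mathbf{X}(t\wedge\theta_1))] - \varphi(\mathbf{x}) = \E\!\int_0^{t\wedge\theta_1}\!\AA\varphi(\mathbf{X}(s))\,\d s$ using It\^o's formula on the interval before the process first hits $|Z|=b$, and the analogous identity with $\BB_\pm$ on excursions inside $\D^\pm$. Concatenating over the stopping-time decomposition $\{(\theta_n,\tau_n)\}$ and passing $t \downarrow 0$ gives that $P_t g$ satisfies $\partial_t \psi = \AA\psi$ in $(0,T]\times\D$ and $\partial_t\psi = \BB_\pm\psi$ in $(0,T]\times\D^\pm$ in the appropriate (distributional or viscosity) sense, with the initial condition $\psi(0,\cdot)=g$ holding by right-continuity of $t\mapsto\mathbf{X}(t)$ and dominated convergence. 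Since $g \in L^2(\bar{\D})$ rather than $C_b$, I would either first prove the statement for $g \in C_b(\bar{\D})$ and then extend by density and an $L^2$-contraction estimate for the semigroup, or interpret \eqref{SVI-eq01} directly in the weak $L^2$ sense; I expect the paper takes the former route, which is why it calls the proof ``classic.''

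The main obstacle is the behavior at the interfaces $\D^\pm$, i.e.\ justifying that the second line of \eqref{SVI-eq01} is the correct equation \emph{on} the boundary set rather than a boundary condition for the first equation. Concretely one must show that the local time / reflection term in the Skorokhod-type decomposition of $Z$ contributes nothing beyond replacing $y\,\partial_z$ by $\min(0,y)\,\partial_z$ (resp.\ $\max(0,y)\,\partial_z$): when $Z=b$ and $Y>0$ the constraint in \eqref{SVIBEPO} keeps $Z$ pinned so the $z$-transport vanishes, while for $Y<0$ the process instantly re-enters $\D$ and the generator there is $\AA$, whose $z$-coefficient $y=\min(0,y)$ agrees with $\BB_+$. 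Making this rigorous requires the well-posedness and the precise structure of the SVI solution cited from \cite{PR14}, together with an argument that the time the process spends on $\{|Z|=b, Y>0\}$ has positive measure (so $\BB_+$ genuinely governs the dynamics there) whereas $\{|Z|=b,Y=0\}$ is negligible. Everything else — It\^o's formula, the Markov property giving the semigroup equation $\partial_t P_t g = P_t(\mathcal{G}g)$, and the density extension from $C_b$ to $L^2$ — is routine, so I would relegate those computations and present only the generator identification and the interface discussion in detail.
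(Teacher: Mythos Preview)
Your approach is valid but takes a genuinely different route from the paper. You propose to prove the stated direction (probabilistic formula $\Rightarrow$ PDE) by identifying the generator piecewise via Dynkin's formula and concatenating over the stopping-time decomposition $\{(\theta_n,\tau_n)\}$. The paper instead runs a \emph{verification} argument in the converse direction: assuming $\psi$ solves the system \eqref{SVI-eq01}, it applies It\^o's formula once to $\psi(t-\tau,\mathbf{X}(\tau))$, uses the decomposition $\d Z = Y\,\d t - \d\Delta$ with $\d\Delta = \indic_{\{Z=b\}}\max(0,Y)\,\d t + \indic_{\{Z=-b\}}\min(0,Y)\,\d t$ together with the pointwise identity $1 = \indic_{\{|Z|<b\}}+\indic_{\{Z=b\}}+\indic_{\{Z=-b\}}$ and $Y-\max(0,Y)=\min(0,Y)$, and observes that all drift terms cancel by the PDE, leaving a martingale; setting $\tau=t$ and taking conditional expectations yields $\psi(t,\mathbf{x})=\E[g(\mathbf{X}(t))\mid\mathbf{X}(0)=\mathbf{x}]$. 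The paper's route is shorter---a single It\^o computation with indicator splitting instead of countably many stopping-time pieces, and no regularity of $P_t g$ to establish---but it technically proves the opposite implication and tacitly assumes a classical PDE solution exists. Your route proves the direction the theorem actually states and makes the interface mechanism on $\D^\pm$ explicit, at the cost of handling the smoothness of $\psi$ (which you correctly flag as needing a density argument from smooth $g$ or a weak formulation). Your guess that the paper first treats $g\in C_b$ and then extends to $L^2$ is not what happens; the paper never discusses the $L^2$ extension.
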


\subsection{Existence of an invariant probability measure}
\noindent 
By selecting an appropriate Lyapunov function, 
an energy estimate for the SVI \eqref{SVIBEPO} is established in Theorem \ref{prop-03}. 
Subsequently, the existence of an invariant probability measure for   $\mathbf{X}(t)$ is deduced from Theorem \ref{Thm-01}. 
To compute this invariant measure, 
an approximation scheme based on a PDE approach is presented in Theorem \ref{Thm-invariant-measure}.
To show these, the following assumptions are needed. 
For any $x,y\in\R$, 
\begin{align}
	y\mathfrak{f}(x,y) &\leq - c_{0}y^{2} + c_{1}xy + c_{2}, ~\text{with}~c_{0} >0, ~c_{1}\leq k\alpha, ~c_{2}\geq 0,\label{assum-01}\\
	x\mathfrak{f}(x,y) &\leq -d_{0}xy + d_{1} x^{2} + d_{2}, ~\text{with}~d_{0} \geq c_0, ~d_{1}<k\alpha, ~d_{2}\geq 0. \label{assum-02}
\end{align}

\begin{remark}
	Assumptions \eqref{assum-01} and \eqref{assum-02} contain the case $\mathfrak{f}(x,y) = -c_{0}y$ from EPPO 
	\cite{BMPT09},	where $c_{0}>0$. 
\end{remark}

\begin{theorem}\label{prop-03} 
	Let
	$V(x,y) \triangleq  ( k \alpha + \frac{c_{0}d_{0}}{2} - c_{1} )x^2 + y^2 +  c_0 xy$. 
	For any $t>0$, 
	\begin{align*}
		\mathbb{E} [V(X(t),Y(t))] \leq V(X(0),Y(0)) +C/C_{1}, 
	\end{align*}
	where $C = \sigma^{2} + 2c_{2} + c_{0}d_{2} + (kb(1-\alpha))^{2}\left(\frac{2}{c_{0}} + \frac{c_{0}}{2(k\alpha - d_{1})}\right)$ 
	and 
	$C_{1} = \min\left(\frac{c_{0}}{3}, \frac{c_{0}(k\alpha - d_{1})}{2k\alpha +c_{0}d_{0} + c^{2}_{0} - 2c_{1}}\right)$.
\end{theorem}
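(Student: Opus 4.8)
The plan is the classical Lyapunov--Gr\"onwall route: apply It\^o's formula to $V(X(t),Y(t))$, bound the generator from above by $-C_1 V + C$ using the structural assumptions, and conclude by a comparison argument. First I would observe that, although $Z$ is a reflected/constrained process, it does not enter $V$, and it enters the $(X,Y)$-dynamics only through the \emph{bounded} coefficient $-k(1-\alpha)Z$ in the drift of $Y$ (recall $|Z(t)|\le b$). Consequently $(X,Y)$ is an honest It\^o process, $V\in C^{2}$, and It\^o's formula applies with no boundary or local-time terms; since $V$ is independent of $z$, the term $y\partial_{z}$ in $\AA$ (and in $\BB_{\pm}$) drops out. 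Writing $a:=k\alpha+\tfrac{c_{0}d_{0}}{2}-c_{1}$ for the coefficient of $x^{2}$ in $V$, the drift of $V(X(t),Y(t))$ equals
\[
(2aX+c_{0}Y)Y+(2Y+c_{0}X)\bigl[\mathfrak f(X,Y)-k(1-\alpha)Z-k\alpha X\bigr]+\sigma^{2}.
\]

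Next I would expand this and apply \eqref{assum-01} to $2Y\mathfrak f(X,Y)$ and \eqref{assum-02} to $c_{0}X\mathfrak f(X,Y)$. The role of the particular value of $a$ is exactly to cancel all $XY$ cross terms (one checks $2a+2c_{1}-c_{0}d_{0}-2k\alpha=0$), leaving
\[
\AA V\le -c_{0}Y^{2}-c_{0}(k\alpha-d_{1})X^{2}-2k(1-\alpha)YZ-c_{0}k(1-\alpha)XZ+\sigma^{2}+2c_{2}+c_{0}d_{2}.
\]
I would then use $|Z|\le b$ and Young's inequality to absorb the two $Z$-terms into half of the negative quadratics: $2k(1-\alpha)b|Y|\le\tfrac{c_{0}}{2}Y^{2}+\tfrac{2(kb(1-\alpha))^{2}}{c_{0}}$ and $c_{0}k(1-\alpha)b|X|\le\tfrac{c_{0}(k\alpha-d_{1})}{2}X^{2}+\tfrac{c_{0}(kb(1-\alpha))^{2}}{2(k\alpha-d_{1})}$ (here $d_{1}<k\alpha$ from \eqref{assum-02} is what makes the second weight admissible). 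This produces $\AA V\le-\tfrac{c_{0}}{2}Y^{2}-\tfrac{c_{0}(k\alpha-d_{1})}{2}X^{2}+C$ with precisely the constant $C$ of the statement.

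To turn this into a bound in terms of $V$, I would first record that $V$ is positive definite: since $c_{1}\le k\alpha$ and $d_{0}\ge c_{0}$ one gets $a\ge\tfrac{c_{0}d_{0}}{2}\ge\tfrac{c_{0}^{2}}{2}>\tfrac{c_{0}^{2}}{4}$, so the quadratic form $V$ is coercive and in particular $V\ge0$. Bounding the cross term by $c_{0}xy\le\tfrac{c_{0}^{2}}{2}x^{2}+\tfrac12 y^{2}$ gives $V\le(a+\tfrac{c_{0}^{2}}{2})x^{2}+\tfrac32 y^{2}$; comparing this with $\tfrac{c_{0}(k\alpha-d_{1})}{2}x^{2}+\tfrac{c_{0}}{2}y^{2}$ coefficientwise shows $\tfrac{c_{0}}{2}Y^{2}+\tfrac{c_{0}(k\alpha-d_{1})}{2}X^{2}\ge C_{1}V(X,Y)$ with exactly $C_{1}=\min\!\bigl(\tfrac{c_{0}}{3},\tfrac{c_{0}(k\alpha-d_{1})}{2a+c_{0}^{2}}\bigr)$, which is the stated constant since $2a+c_{0}^{2}=2k\alpha+c_{0}d_{0}+c_{0}^{2}-2c_{1}$. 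Hence $\AA V\le-C_{1}V+C$. Finally I would localize at exit times $\tau_{R}$ of balls $\{|(X,Y)|<R\}$ so that the stochastic integral $\int_{0}^{\cdot}(2Y+c_{0}X)\sigma\,\d W$ is a genuine mean-zero martingale, take expectations, and let $R\to\infty$ using the a priori second-moment bound for the well-posed SVI \eqref{SVIBEPO}; this yields $\E[V(X(t),Y(t))]\le V(X(0),Y(0))+\int_{0}^{t}\bigl(-C_{1}\E[V(X(s),Y(s))]+C\bigr)\,\d s$, and Gr\"onwall gives $\E[V(X(t),Y(t))]\le e^{-C_{1}t}V(X(0),Y(0))+\tfrac{C}{C_{1}}(1-e^{-C_{1}t})\le V(X(0),Y(0))+C/C_{1}$, using $V\ge0$.

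The argument is conceptually routine, so the real work is bookkeeping: one must choose the Young weights so that the surviving negative-definite quadratic dominates $C_{1}V$ with \emph{exactly} the advertised constants $C$ and $C_{1}$, and one must verify the positive-definiteness of $V$, which is precisely what legitimizes both the final comparison step and the passage from $e^{-C_{1}t}V(X(0),Y(0))+\tfrac{C}{C_{1}}(1-e^{-C_{1}t})$ to the stated bound. The only genuinely technical point is the standard localization needed to justify It\^o's formula and the vanishing of the martingale term, which is immediate from well-posedness of \eqref{SVIBEPO} and the Lipschitz/boundedness of the coefficients.
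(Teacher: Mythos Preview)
Your proof is correct and follows essentially the same route as the paper's: apply It\^o's formula to $V(X,Y)$, use assumptions \eqref{assum-01}--\eqref{assum-02} together with Young's inequality (with the same weights) to arrive at $\frac{\d}{\d t}\E[V]\le -C_{1}\E[V]+C$, and conclude by Gr\"onwall. Your handling of the elastic/plastic distinction (observing that $V$ is independent of $z$ so only the bounded $|Z|\le b$ matters) and your explicit localization and positive-definiteness check are slightly more detailed than the paper's proof, but the argument is the same.
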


Next, we define the probability law $\mu(t)$ on $(\bar{\mathcal{D}}, \mathcal{B})$ associated with $\mathbf{X}(t)$ by
\begin{align*}
	\mu(t)(\phi) \triangleq \E[\phi (\mathbf{X}(t))] = \mu(0) (P_{t}\phi), ~~\text{for}~\phi\in	C_{b}(\bar{\mathcal{D}}). 
\end{align*}
The operator $P_{t}$ is defined by
$P_{t}\phi(\mathbf{x}) \triangleq \E[\phi(\mathbf{X}(t)) | \mathbf{X}(0) = \mathbf{x}]$, 
for $\phi\in	C_{b}(\bar{\mathcal{D}})$.

\begin{theorem}\label{Thm-01}
	The process $\mathbf{X}(t)$ generated by SVI \eqref{SVIBEPO} 
	with initial condition $\mathbf{X}(0) = \mathbf{x}\in \bar{\mathcal{D}}$ 
	admits an invariant measure $\mu$, i.e., 
	$\mu(0) = \mu$ implies $\mu(t) = \mu$, $\forall t \geq 0.$  
\end{theorem}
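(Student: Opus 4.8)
The plan is to invoke the classical Krylov--Bogoliubov argument, using the energy estimate of Theorem \ref{prop-03} to supply the tightness that the argument requires. First I would observe that the quadratic form $V(x,y) = (k\alpha + \tfrac{c_0 d_0}{2} - c_1)x^2 + y^2 + c_0 xy$ is positive definite on $\R^2$: under assumptions \eqref{assum-01}--\eqref{assum-02} one has $c_1 \le k\alpha$ and $d_0 \ge c_0 > 0$, so the coefficient of $x^2$ is at least $k\alpha + \tfrac{c_0^2}{2} > 0$, and the discriminant condition $c_0^2 < 4(k\alpha + \tfrac{c_0 d_0}{2} - c_1)$ holds because $k\alpha - c_1 \ge 0$ and $\tfrac{c_0 d_0}{2} \ge \tfrac{c_0^2}{2} > \tfrac{c_0^2}{4}$. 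Hence $V(x,y) \to \infty$ as $|(x,y)| \to \infty$, so the sublevel sets $\{V \le R\}$ are bounded in the $(x,y)$-variables; intersected with the compact constraint $z \in [-b,b]$ they are relatively compact subsets of $\bar{\D}$.

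Next I would define the time-averaged (Cesàro) measures $\nu_T \triangleq \tfrac{1}{T}\int_0^T \mu(t)\, \d t$ on $(\bar{\D},\B)$, where $\mu(t)$ is the law of $\mathbf{X}(t)$ started from a fixed $\mathbf{x} \in \bar{\D}$. Applying Theorem \ref{prop-03} and Markov's inequality, for every $t > 0$ we get $\P(V(X(t),Y(t)) > R) \le (V(\mathbf{x}) + C/C_1)/R$, a bound uniform in $t$; averaging over $[0,T]$ gives $\nu_T(\{V > R\}) \le (V(\mathbf{x}) + C/C_1)/R$ uniformly in $T$. Combined with the relative compactness of $\{V \le R\} \cap \bar{\D}$ noted above, this shows the family $\{\nu_T\}_{T \ge 1}$ is tight on $\bar{\D}$, so by Prokhorov's theorem there is a sequence $T_k \to \infty$ and a probability measure $\mu$ on $(\bar{\D},\B)$ with $\nu_{T_k} \Rightarrow \mu$ weakly.

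Finally I would verify that this limit $\mu$ is invariant for the semigroup $(P_t)_{t\ge 0}$. The standard computation gives, for $\phi \in C_b(\bar{\D})$,
\begin{align*}
	\nu_{T_k}(P_s \phi) - \nu_{T_k}(\phi) = \frac{1}{T_k}\left( \int_{T_k}^{T_k + s} \mu(t)(\phi)\,\d t - \int_0^s \mu(t)(\phi)\,\d t \right),
\end{align*}
whose right-hand side is bounded by $2s\|\phi\|_\infty / T_k \to 0$. If one knows the Feller property of $(P_t)$ — i.e. $P_s\phi \in C_b(\bar{\D})$ for $\phi \in C_b(\bar{\D})$, which follows from continuous dependence of the SVI solution on its initial data (well-posedness is quoted from \cite{PR14}) — then passing to the limit $k \to \infty$ in both terms yields $\mu(P_s\phi) = \mu(\phi)$ for all $s \ge 0$ and all $\phi \in C_b(\bar{\D})$, which is precisely the statement that $\mu(0) = \mu$ implies $\mu(t) = \mu$ for all $t \ge 0$. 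The main obstacle I anticipate is the Feller property: one must show that two SVI trajectories started from nearby points in $\bar{\D}$ stay close in expectation (e.g. via a Gronwall estimate exploiting the $1$-Lipschitz nature of the projection $\xi \mapsto \Pi_{[-b,b]}(\xi)$ implicit in the third line of \eqref{SVIBEPO} and the Lipschitz bound on $\mathfrak{f}$), uniformly enough to pass the expectation through a bounded continuous test function; the boundary pieces $\D^\pm$ where the dynamics switches require a little care but the contraction property of the reflection keeps the argument elementary.
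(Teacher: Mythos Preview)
Your proposal is correct and follows essentially the same Krylov--Bogoliubov route as the paper: the paper packages the tightness of the Ces\`aro averages into a separate lemma (using the completed-square form $V(x,y)=\mathfrak{h}^2 x^2+(c_0x/2+y)^2$ with $\mathfrak{h}^2=k\alpha+c_0d_0/2-c_1-c_0^2/4$, equivalent to your discriminant check) and then runs the identical time-shift computation to show the weak limit is invariant. The only substantive difference is that you explicitly isolate the Feller property of $(P_t)$ as a needed ingredient for passing to the limit in $\nu_{T_k}(P_s\phi)\to\mu(P_s\phi)$, whereas the paper uses it silently; your instinct to justify it via Lipschitz dependence of the SVI on initial data is the right one.
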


Such an invariant measure $\mu$ for $\mathbf{X}$ 
must annihilate the infinitesimal generator of $P_{t}$ on any test function, 
as shown in the following corollary. 
\begin{corollary}\label{coro02}
	The measure $\mu$ satisfies the ultra-weak formulation
	\begin{align*}
		\int_{\mathcal{D}} \AA\varphi(x,y,z) \d \mu 
		+ \int_{\mathcal{D}^+} \BB_{+}\varphi(x,y,b) \d \mu
		+ \int_{\mathcal{D}^-} \BB_{-}\varphi(x,y,- b) \d \mu
		= 0,~~\forall \varphi\in C_{b}(\bar{\mathcal{D}}). 
	\end{align*}
\end{corollary}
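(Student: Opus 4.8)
The plan is to obtain the ultra-weak identity by differentiating the invariance relation in time and identifying the time-derivative at $t=0$ with the generator furnished by the backward Kolmogorov equation of Theorem~\ref{Prop01_BEPO}. First I would fix $\varphi$ in a class regular enough that $\AA\varphi$, $\BB_{+}\varphi$, $\BB_{-}\varphi$ make classical sense (say $\varphi\in C^{2}(\bar{\D})$ with bounded first and second derivatives, or $\varphi\in C_{c}^{\infty}$; this is the class implicitly intended in the statement, since $\AA\varphi$ is meaningless for merely continuous $\varphi$), and, if desired, recover any broader admissible class at the end by a routine approximation for the semigroup $P_{t}$.

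Concretely: since $\mu$ is invariant, Theorem~\ref{Thm-01} gives $\mu(t)(\varphi)=\mu(\varphi)$ for all $t\ge 0$, and by definition $\mu(t)(\varphi)=\mu(0)(P_{t}\varphi)=\mu(P_{t}\varphi)$, so $t\mapsto\mu(P_{t}\varphi)$ is constant and its right derivative at $t=0$ vanishes. Set $\psi(t,\mathbf{x})=P_{t}\varphi(\mathbf{x})=\E[\varphi(\mathbf{X}(t))\mid\mathbf{X}(0)=\mathbf{x}]$. By Theorem~\ref{Prop01_BEPO}, $\psi$ solves $\partial_{t}\psi=\AA\psi$ on $(0,T]\times\D$ and $\partial_{t}\psi=\BB_{\pm}\psi$ on $(0,T]\times\D^{\pm}$, with $\psi(0,\cdot)=\varphi$; hence $\partial_{t}\psi(0^{+},\cdot)$ equals $\AA\varphi$ on $\D$ and $\BB_{\pm}\varphi$ on $\D^{\pm}$. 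Using that $\bar{\D}=\D\cup\D^{+}\cup\D^{-}$ is a disjoint union, so that every integral against $\mu$ splits into these three pieces, interchanging $\frac{d}{dt}$ with $\int\,\d\mu$ yields
\begin{align*}
0=\frac{d}{dt}\bigg|_{t=0^{+}}\int_{\bar{\D}}\psi(t,\mathbf{x})\,\d\mu(\mathbf{x})=\int_{\D}\AA\varphi\,\d\mu+\int_{\D^{+}}\BB_{+}\varphi\,\d\mu+\int_{\D^{-}}\BB_{-}\varphi\,\d\mu,
\end{align*}
which is exactly the claimed formulation.

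I expect the only genuine difficulty to be justifying this interchange, together with the $\mu$-integrability of $\AA\varphi,\BB_{\pm}\varphi$: the coefficients of these operators — the terms $y\partial_{x}$, $\beta(x,y,z)\partial_{y}$ with $\beta$ growing linearly in $x$, and $y\partial_{z}$ — are unbounded on $\bar{\D}$, so $t^{-1}(P_{t}\varphi-\varphi)$ need not be uniformly bounded and dominated convergence is not automatic. I would control this with the energy estimate of Theorem~\ref{prop-03}: $V$ has quadratic growth, and that estimate (which underlies the tightness used to construct $\mu$ in Theorem~\ref{Thm-01}) forces $\int_{\bar{\D}}(x^{2}+y^{2})\,\d\mu<\infty$. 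Writing the Dynkin expansion $P_{t}\varphi(\mathbf{x})-\varphi(\mathbf{x})=\int_{0}^{t}\E[\mathcal{L}\varphi(\mathbf{X}(s))\mid\mathbf{X}(0)=\mathbf{x}]\,\d s$ exhibits $t^{-1}(P_{t}\varphi-\varphi)$ as a time-average of a quantity dominated by an affine function of $V$; combining this with the quadratic moment bound on $\mu$ and a smooth cut-off localization in $(x,y)$, passed to the limit, gives the uniform integrability needed for dominated convergence, legitimizes separating the boundary contributions on $\D^{\pm}$ (which carry positive $\mu$-mass) from the interior one, and also disposes of the provisional smoothness/support restriction on $\varphi$.
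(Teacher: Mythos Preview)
Your approach is essentially the same as the paper's: the paper simply invokes the fact that an invariant measure annihilates the infinitesimal generator $\Lambda$ of $P_{t}$ (citing a reference), and then identifies $\Lambda\varphi$ as $\AA\varphi$ on $\D$ and $\BB_{\pm}\varphi$ on $\D^{\pm}$. Your differentiation of $t\mapsto\mu(P_{t}\varphi)$ at $t=0^{+}$ together with Theorem~\ref{Prop01_BEPO} is precisely this computation made explicit, and your additional care about the regularity of $\varphi$ and the dominated-convergence justification via the moment bound from Theorem~\ref{prop-03} fills in details that the paper leaves to its reference.
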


\begin{theorem}\label{Thm-invariant-measure}
	Let $\lambda>0$ and $g\in L^{2}(\bar{\D})$. 
	The unique solution $u_{\lambda}$ bounded by 
	$\frac{||g||_\infty}{\lambda}$ solves the following PDEs
	\begin{align}\label{PDE-ulambda}
		\begin{cases}
			\lambda u - \AA u  = g,~~\text{in~}\mathcal{D}, \\
			\lambda u - \BB_\pm u = g,~~\text{in~}\mathcal{D}^\pm,
		\end{cases}
	\end{align}
	satisfying 
	\begin{align}
		\lim_{\lambda \to 0} \lambda u_\lambda(\mathbf{x})
		= \lim_{t\rightarrow \infty} \E[g(\mathbf{X}(t)) | \mathbf{X}(0) = \mathbf{x}]
		= \int_{\bar{\mathcal{D}}} g \d \mu, ~\forall \mathbf{x}\in\bar{\mathcal{D}}. \label{thm3-01}
	\end{align}
\end{theorem}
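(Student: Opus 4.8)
\noindent\emph{Proof strategy.}
The approach is to identify $u_\lambda$ with the $\lambda$-resolvent of the transition semigroup applied to $g$, namely
\[
u_\lambda(\mathbf{x}) \;=\; \int_0^\infty e^{-\lambda t}\,\psi(t,\mathbf{x})\,\d t,
\qquad
\psi(t,\mathbf{x}) := P_t g(\mathbf{x}) = \E\bigl[g(\mathbf{X}(t))\mid\mathbf{X}(0)=\mathbf{x}\bigr],
\]
and then to let $\lambda\to0$ using the long-time behaviour of $t\mapsto P_t g(\mathbf{x})$. I take $g$ bounded (the general $L^2$ case by approximation), which also accounts for the stated bound $\|u_\lambda\|_\infty\le\|g\|_\infty/\lambda$. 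For the representation I set $v_\lambda(\mathbf{x}):=\int_0^\infty e^{-\lambda t}\psi(t,\mathbf{x})\,\d t$; by Theorem~\ref{Prop01_BEPO}, $\psi$ solves $\partial_t\psi=\AA\psi$ in $(0,\infty)\times\D$ and $\partial_t\psi=\BB_\pm\psi$ on $\D^\pm$ with $\psi(0,\cdot)=g$, and interior regularity estimates for $\psi$ let me differentiate under the integral; integrating by parts in $t$ then gives $\lambda v_\lambda-\AA v_\lambda=g$ in $\D$ and $\lambda v_\lambda-\BB_\pm v_\lambda=g$ on $\D^\pm$, with $|v_\lambda|\le\|g\|_\infty/\lambda$. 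Uniqueness of the bounded solution follows by a verification argument: if $w$ is the difference of two such solutions, then $\lambda w=\AA w$ in $\D$ and $\lambda w=\BB_\pm w$ on $\D^\pm$, and applying It\^o's formula along the SVI \eqref{SVIBEPO} to $s\mapsto e^{-\lambda s}w(\mathbf{X}(s))$ — the $\min/\max$ corrections in $\BB_\pm$ being exactly what makes the full generator consistent across the elastic--plastic switching, so Dynkin's formula produces no extra boundary term — the drift cancels and $e^{-\lambda s}w(\mathbf{X}(s))$ is a bounded martingale; hence $w(\mathbf{x})=\lim_{T\to\infty}\E[e^{-\lambda T}w(\mathbf{X}(T))]=0$, so $u_\lambda=v_\lambda$.

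Next I would exploit invariance. By Theorem~\ref{Thm-01} (equivalently, the semigroup form of Corollary~\ref{coro02}), $\int_{\bar{\D}}P_t g\,\d\mu=\int_{\bar{\D}}g\,\d\mu$ for all $t\ge0$, so Fubini yields $\int_{\bar{\D}}\lambda u_\lambda\,\d\mu=\int_{\bar{\D}}g\,\d\mu$ for every $\lambda>0$. On the other hand, the substitution $s=\lambda t$ in the representation gives $\lambda u_\lambda(\mathbf{x})=\int_0^\infty e^{-s}\psi(s/\lambda,\mathbf{x})\,\d s$. Thus, once one knows that $\psi(t,\mathbf{x})=P_t g(\mathbf{x})$ converges as $t\to\infty$, dominated convergence (using $|\psi|\le\|g\|_\infty$) gives $\lim_{\lambda\to0}\lambda u_\lambda(\mathbf{x})=\lim_{t\to\infty}P_t g(\mathbf{x})$, and integrating this same pointwise limit against the probability measure $\mu$ and comparing with the Fubini identity forces the common value to equal $\int_{\bar{\D}}g\,\d\mu$. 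This proves all the equalities in \eqref{thm3-01} modulo the ergodic convergence of $P_t g$.

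That convergence is the main obstacle. The Lyapunov estimate of Theorem~\ref{prop-03} supplies the Foster--Lyapunov input — $V$ is coercive in $(x,y)$ (positive definiteness of its quadratic form follows from $c_1\le k\alpha$ and $d_0\ge c_0>0$) and $z$ lives in the compact $[-b,b]$, so $\{\mu(t)\}_{t\ge0}$ is tight and every Ces\`aro limit is invariant, which is how Theorem~\ref{Thm-01} itself is obtained. Upgrading tightness to genuine convergence to a single limit (hence also uniqueness of $\mu$) requires a minorization/Harris-type condition, and here the genuinely new difficulty relative to the two-dimensional EPPO appears: in the elastic phase $\d X=\d Z=Y\,\d t$, so the interior diffusion stays on the leaves $\{x-z=\mathrm{const}\}$ — on each leaf it is a hypoelliptic kinetic diffusion and smoothing, but $\AA$ is \emph{not} hypoelliptic on all of $\D$, and mixing across leaves happens only through plastic excursions at $|z|=b$, where $z$ is frozen while $x$ keeps moving. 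I would therefore build a controllability argument combining the within-leaf smoothing with a plastic excursion that drives $x-z$ to a prescribed value, extract from it a minorization on a sublevel set of $V$, and conclude via the Harris theorem; verifying the minorization across the switching faces $\D^\pm$, where the $z$-dynamics degenerates to the one-sided $\min(0,y)\partial_z$ or $\max(0,y)\partial_z$, is the delicate point.
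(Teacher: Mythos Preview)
Your core approach --- representing $u_\lambda$ as the Laplace transform (resolvent) $\int_0^\infty e^{-\lambda t}\psi(t,\cdot)\,\d t$ of $\psi=P_t g$, integrating by parts in $t$ to obtain $\lambda u_\lambda-\AA u_\lambda=g$ and the boundary equations, reading off the bound $\|u_\lambda\|_\infty\le\|g\|_\infty/\lambda$, and then passing to the limit via an Abelian argument --- is exactly what the paper does. The only cosmetic difference is that the paper invokes the final value theorem for the Laplace transform where you write the substitution $s=\lambda t$; your uniqueness verification via $e^{-\lambda s}w(\mathbf{X}(s))$ is more explicit than the paper, which simply speaks of ``the unique solution'' without elaboration.

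Where you rightly pause --- the ergodic convergence $P_t g(\mathbf{x})\to\int g\,\d\mu$ --- the paper does not do more: it writes the final equality $=\int_{\bar{\D}}g\,\d\mu$ with no justification and then, in the remark immediately following the theorem, concedes that uniqueness of the invariant measure (hence the ergodic identification of the limit) ``remains a challenging issue'' and is only supported by numerical evidence. So the gap you isolate is real and is shared by the paper's own proof; your Harris-type programme via within-leaf hypoelliptic smoothing plus plastic-excursion controllability to couple the leaves $\{x-z=\text{const}\}$ is a reasonable line of attack, but it goes well beyond anything the paper attempts or claims to establish.
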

\begin{remark}
	As noted in \cite{BMPT09}, 
	\eqref{thm3-01} is numerically significant in that it allows for the calculation 
	of $\lim\limits_{t\rightarrow \infty} \E[g(\mathbf{X}(t)) | \mathbf{X}(0) = \mathbf{x}]$ without solving a time-dependent problem. 
	Furthermore, 
	the resulting limit is independent of the initial position $\mathbf{x}$.
\end{remark}

\begin{remark}
	The uniqueness of the invariant measure remains a challenging issue, 
	as the system exhibits degenerate noise and non-smooth drift due to hysteresis, 
	which precludes the application of standard ergodic theory for diffusions. 
	Nevertheless, in numerical experiments, 
	we compare the ``spatial average" derived from the invariant measure (via PDE methods) with 
	the ``temporal average" obtained from long-term simulations (via Monte Carlo). 
	These comparisons strengthen our confidence in the ergodicity assumption. 
\end{remark}

\section{Numerical method for the invariant measure}\label{sec:num-method} 
For numerical computation, 
the unbounded domain $\bar{\mathcal{D}}$ is  truncated by
$\mathcal{D}_{XY} \triangleq [-\bar{x}, \bar{x}] \times [-\bar{y}, \bar{y}] \times [-b, b]$, 
where $\bar{x}$ 
and $\bar{y}$ are chosen sufficiently large that the probability of finding the 
underlying process outside $\mathcal{D}_{XY}$ is negligible. 
Note that the small parameter $\lambda$ in PDEs \eqref{PDE-ulambda} poses a challenge 
for the finite difference method: 
to preserve numerical accuracy, 
an extremely fine mesh must be employed, 
which incurs high computational costs. 
To address this efficiency bottleneck, we introduce the scaling transformation:
$\tilde{x} = \lambda x$, $\tilde{y} = \lambda y$, $\tilde{z} = \lambda z$, 
and let $v(\tilde{x},\tilde{y},\tilde{z}) \triangleq 
\lambda u(\frac{\tilde{x}}{\lambda},\frac{\tilde{y}}{\lambda},\frac{\tilde{z}}{\lambda})$. 
As a consequence, the truncated domain changed to 
$[-\lambda \bar{x}, \lambda \bar{x}] \times [-\lambda \bar{y}, \lambda \bar{y}] 
\times [-\lambda b, \lambda b]$. 
The numerical approximation is built on the following three-dimensional finite difference grid
\begin{align*}
	\cG \triangleq \left\{(\tilde{x}_{\ri},\tilde{y}_{\rj},\tilde{z}_{\rk}) 
	= (-\lambda\bar{x} + (\ri-1) \delta_{\tilde{x}}, -\lambda\bar{y} + (\rj-1) \delta_{\tilde{y}}, 
	-\lambda b + (\rk-1) \delta_{\tilde{z}})\right\}_{1 \le \ri \le I, 1 \le \rj \le J, 1 \le \rk \le K},
\end{align*}
where $\delta_{\tilde{x}} \triangleq \frac{2\lambda\bar{x}}{I-1}$, 
$\delta_{\tilde{y}} \triangleq \frac{2\lambda\bar{y}}{J-1}$, 
$\delta_{\tilde{z}} \triangleq \frac{2\lambda b}{K-1}$, 
$I,J$, and $K$ are odd integers greater than 1. 
Let $v_{\ri, \rj, \rk}$ denote the numerical approximation of $v(\tilde{x}_{\ri}, \tilde{y}_{\rj}, \tilde{z}_{\rk})$, 
let $\bm{v}$ be a vector whose elements are $v_{\ri, \rj, \rk}$.  
For the sake of completeness, 
we provide the following notations of the difference symbols with respect to $\tilde{x}$, 
analogous definitions apply to $\tilde{y}$ and $\tilde{z}$. 
\begin{itemize}
	\item The first-order forward and backward finite differences:
	
	$(\partial^{\mathrm{f}}_{\tilde{x}}v)_{\ri,\rj,\rk} \triangleq (v_{\ri+1, \rj, \rk} - v_{\ri, \rj, \rk})/\delta_{\tilde{x}}$, 
	$(\partial^{\mathrm{b}}_{\tilde{x}}v)_{\ri,\rj,\rk} \triangleq (v_{\ri, \rj, \rk} - v_{\ri-1, \rj, \rk})/\delta_{\tilde{x}}$. 
	\item  The second-order centered finite difference: 
	
	$(\partial^{2}_{\tilde{x}}v)_{\ri,\rj,\rk} \triangleq (v_{\ri+1, \rj, \rk} - 2v_{\ri, \rj, \rk} +  v_{\ri-1, \rj, \rk})/\delta^{2}_{\tilde{x}}$.  
	\item The second-order forward and backward finite differences: 
	
	$(\partial^{\mathrm{ff}}_{\tilde{x}}v)_{\ri,\rj,\rk} \triangleq (-3v_{\ri, \rj, \rk} + 4v_{\ri+1, \rj, \rk} - v_{\ri+2, \rj, \rk})/(2\delta_{\tilde{x}})$, 
	$(\partial^{\mathrm{bb}}_{\tilde{x}}v)_{\ri,\rj,\rk} \triangleq (3v_{\ri, \rj, \rk} - 4v_{\ri-1, \rj, \rk} + v_{\ri-2, \rj, \rk})/(2\delta_{\tilde{x}})$. 
	\item The second-order upwind difference: 
	
	$(c(\tilde{x},\tilde{y},\tilde{z})\tilde{\partial}_{\tilde{x}}v)_{\ri,\rj,\rk} 
	\triangleq \max(0,c_{\ri,\rj,\rk})(\partial^{\mathrm{ff}}_{\tilde{x}}v)_{\ri,\rj,\rk}
	+ \min(0,c_{\ri,\rj,\rk})(\partial^{\mathrm{bb}}_{\tilde{x}}v)_{\ri,\rj,\rk}$. 
\end{itemize} 
The discretization of the scaled version of PDEs \eqref{PDE-ulambda} follows the form: 
\begin{align}
	v_{\ri,\rj,\rk}  - (L\bm{v})_{\ri,\rj,\rk} = \tilde{g}_{\ri,\rj,\rk}, \label{Disc-01} 
\end{align}
where 
$(L\bm{v})_{\ri,\rj,\rk} \triangleq (L_{\tilde{y}}\bm{v})_{\ri,\rj,\rk} + S_{\tilde{x}}
+S_{\tilde{z}}$,  
with $(L_{\tilde{y}}\bm{v})_{\ri,\rj,\rk} = \frac{\lambda\sigma^{2}}{2}(\partial^{2}_{\tilde{y}}v)_{\ri,\rj,\rk}
+ (\tilde{\beta}(\tilde{x},\tilde{y},\tilde{z})\tilde{\partial}_{\tilde{y}}v)_{\ri,\rj,\rk}$, 
$\tilde{\beta}$ and $\tilde{g}$ are the scaled version of the functions $\beta$ and $g$. 
The terms $S_{\tilde{x}}$ and $S_{\tilde{z}}$ 
are determined by the region types and boundary indices, as specified below:
\begin{itemize}
	\item  Interior (black triangles, shown in Fig. \ref{fig:discretization}):
	$S_{\tilde{x}} = (\frac{\tilde{y}}{\lambda}\tilde{\partial}_{\tilde{x}}v)_{\ri,\rj,\rk}, \quad 
	S_{\tilde{z}} = (\frac{\tilde{y}}{\lambda}\tilde{\partial}_{\tilde{z}}v)_{\ri,\rj,\rk}$. 
	\item Boundaries (blue/red squares and diamonds, shown in Fig. \ref{fig:discretization}):
	\begin{align*}
		S_{\tilde{x}} =  
		\begin{cases}
			\max(0,\frac{\tilde{y}_{\rj}}{\lambda}) (\partial^{\mathrm{ff}}_{\tilde{x}}v)_{\ri,\rj,\rk}, & \text{if}~\ri = 1,\\
			\min(0,\frac{\tilde{y}_{\rj}}{\lambda}) (\partial^{\mathrm{bb}}_{\tilde{x}}v)_{\ri,\rj,\rk}, & \text{if}~ \ri = I, 
		\end{cases}
		\quad\quad 
		S_{\tilde{z}} =  
		\begin{cases}
			\max(0,\frac{\tilde{y}_{\rj}}{\lambda}) (\partial^{\mathrm{ff}}_{\tilde{z}}v)_{\ri,\rj,\rk}, & \text{if}~\rk = 1,\\
			\min(0,\frac{\tilde{y}_{\rj}}{\lambda}) (\partial^{\mathrm{bb}}_{\tilde{z}}v)_{\ri,\rj,\rk}, & \text{if}~ \rk = K.
		\end{cases}
	\end{align*}
\end{itemize}
Additionally, for the gray round points ($y$-boundary), 
homogeneous Neumann conditions are imposed to enforce zero-flux: 
$(\partial^{\mathrm{f}}_{\tilde{y}}v)_{\ri,1,\rk} = 0$ and 
$(\partial^{\mathrm{b}}_{\tilde{y}}v)_{\ri,J,\rk} = 0$.  
The numerical algorithm can be found in Appendix \ref{app-A02}.  

\begin{figure}[h!]
	\centering
	\includegraphics[width=4.5in]{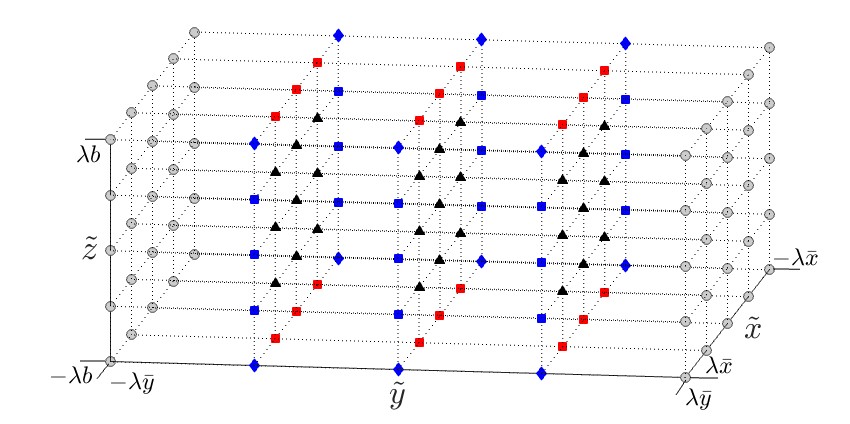}
	\caption{\footnotesize Discretization of $\mathcal{D}_{XY}$. 
		At black triangles, the discretized equation is satisfied. 
		At gray round points, homogeneous Neumann boundary conditions are used. 
		The non-standard boundary conditions are employed at blue squares/diamonds and red squares.}
	\label{fig:discretization}
\end{figure}

\begin{remark}
	Due to the absence of diffusion in $\tilde{x}$ and $\tilde{z}$, 
	outward flux terms in $S_{\tilde{x}}$ and $S_{\tilde{z}}$ are omitted at the boundaries 
	$\tilde{x} = \pm \lambda\bar{x}$ and $\tilde{z} = \pm \lambda b$, 
	ensuring the solution be determined by inward propagation. 
\end{remark}
\begin{remark}
	Note that the second-order upwind scheme involves two adjacent upstream (or downstream) nodes, 
	some special treatment are required for finite difference nodes near the boundary to 
	avoid referencing values outside the computational domain. 
	Drawing on the analysis in \cite{vanka1987second,qiu2023global}, 
	this study reverts to the first-order upwind scheme for such boundary-adjacent points.  
\end{remark}

\section{Numerical results} \label{sec:num-results}
In seismology, statistical physics, and engineering, 
it is necessary to investigate the frequency of threshold crossing and the probability of the risk of failure, 
especially for risk assessment and schematic design.  
For instance,  in building and bridge structural design, 
quantifying the frequency and amplitude of ground motion 
at specific sites is essential to ensure the seismic resilience of constructions. 
﻿
\subsection{Frequency of crossing threshold}
In this subsection, 
we focus on empirical approaches to compute the crossing frequency of SVI \eqref{SVIBEPO} on a unit time interval. 
One of the powerful tools to compute this quantity is the Rice's formula:
\begin{align*}
	\nu(a_{1}) = \int_{-\infty}^{\infty} |y| p(a_{1},y)\d y, 
\end{align*}
where $p(\cdot,\cdot)$ is the joint density of $X(0)$ and $\dot{X}(0)$, 
$a_{1}\in\R$ is a given threshold. 
For details, refer to Appendix \ref{app-A03} and \cite{LLR83}.  
Generally, the exact expression of the invariant probability density function of  \eqref{SVIBEPO} is not easy to obtain.    
Therefore, we cannot use Rice's formula directly.   
Instead, we provide an approximation to this quantity. 
Specifically,  
Theorem \ref{Thm-invariant-measure} implies that 
\begin{align*}
	\nu(a_{1}) = \int_{\bar{\D}} g(\mathbf{x}) \d \mu
	=\lim_{\lambda\rightarrow0}\lambda u_{\lambda},
\end{align*}
where $u_{\lambda}$ is the solution of \eqref{PDE-ulambda} with $g(\mathbf{x}) = |y|\delta(x-a_{1})$. 
In practice, $g(\mathbf{x})$ can be approximated by
$\frac{|y|}{\sqrt{2\pi}\varepsilon_{0}}\exp\left(-\frac{(x-a_{1})^{2}}{2\varepsilon_{0}^{2}}\right)$ 
for a small $\varepsilon_{0} > 0$. 

Another direct approach is through Monte Carlo simulation. 
Let $\delta_{t} > 0$ be the time step, 
and take a sufficiently large final time $T = N\delta_{t}$. 
Define the discrete times $t_{n} = n\delta_{t}$ for \( n = 0, 1, \ldots, N \).  
We then construct a sequence $\{ (X_n, Y_n, Z_n), 0\leq n \leq N\}$. 
For each $n$, $(X_n, Y_n, Z_n)$ denotes an approximation of $(X(t_n), Y(t_n), Z(t_n))$.  
By using the property of Dirac delta function, we proceed with the following approximation
\begin{align*}
	\nu(a_{1}) &\approx \frac{1}{T}\int_{0}^{T}|Y(t)|\delta(X(t) - a_{1})\d t\\
	& = \frac{1}{T}\int_{0}^{T}|Y(t)| \sum_{\substack{\{j: X_{j} = a_{1}, Y_{j}\neq 0\}}}\frac{\delta(t-t_{j})}{|Y_{j}|}\d t\\
	& = \frac{1}{T}\sum_{j} \indic_{\{ X_{j} = a_{1}, Y_{j}\neq 0\}}.  
\end{align*}

We compare the results computed using the finite difference approach to PDEs \eqref{PDE-ulambda} 
with the results obtained by Monte Carlo simulations. 
The settings of parameters are listed in Appendix \ref{app-A04}. 
As shown in Fig. \ref{fig:example01}, the results based on Monte Carlo are close to the PDE results. 

\begin{figure}[htpb]
	\centering
	\captionsetup{width=14cm}
	\captionsetup{font= footnotesize}
	\includegraphics[width=5in]{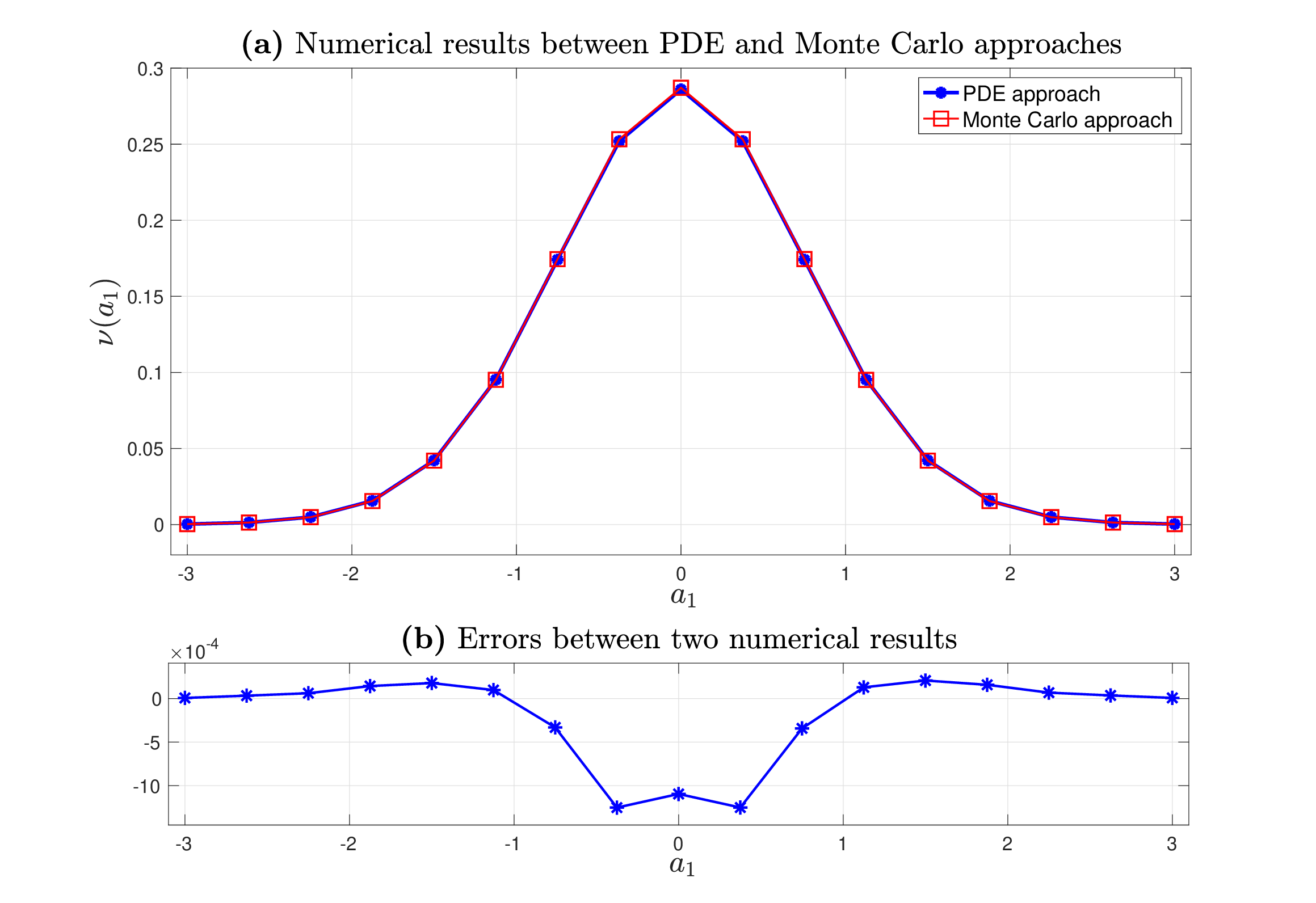}
	\caption{\footnotesize 	Frequency of threshold crossing per unit time interval for $a_{1} \in [-3,3]$.}
	\label{fig:example01}
\end{figure}

\subsection{Probability of the serviceability limit state}
Unlike elastic deformation, 
plastic deformation is permanent and cumulative. 
The deformation $\Delta(t) = X(t) - Z(t)$ can result either from a single large event or 
from the accumulation of many smaller ones, 
the latter being a direct consequence of the ratchet effect. 
In this subsection, we study the probability
$\mathbf{P}(a_{2}) \triangleq\lim\limits_{t\rightarrow \infty}\P[|X(t) - Z(t)|\leq a_{2}~|~ \mathbf{X}(0) = \mathbf{x}]$. 
By Theorem \ref{Thm-invariant-measure}, 
\begin{align*}
	\mathbf{P}(a_{2}) &= \lim_{t\rightarrow \infty}\E[\indic_{\{|X(t)-Z(t)| \leq a_{2}\}}(\mathbf{X}(t)) ~|~ \mathbf{X}(0) = \mathbf{x}]\\
	& = 	\lim_{\lambda\rightarrow0}\lambda u_{\lambda},
\end{align*}
where $u_{\lambda}$ is the solution of \eqref{PDE-ulambda} with
$g(\mathbf{x}) = \indic_{\{|x-z| \leq a_{2}\}}(\mathbf{x})$,
$a_{2}\geq0$.  
For the probabilistic numerical scheme, 
we use the Monte Carlo method, and the settings are the same as in Example 1. 
Then, $\mathbf{P}(a_{2})$ can be approximated by 
\begin{align*}
	\mathbf{P}(a_{2}) &\approx\frac{1}{N}\sum_{j = 1}^{N}\indic_{\{|X_{j}-Z_{j}| \leq a_{2}\}}.
\end{align*}

The numerical results obtained from PDE and Monte Carlo approaches are shown in Fig. \ref{fig:example02}. 
The settings of parameters can be found in Appendix \ref{app-A04}.

\begin{figure}[htpb]
	\centering
	\captionsetup{width=14cm}
	\captionsetup{font= footnotesize}
	\includegraphics[width=5in]{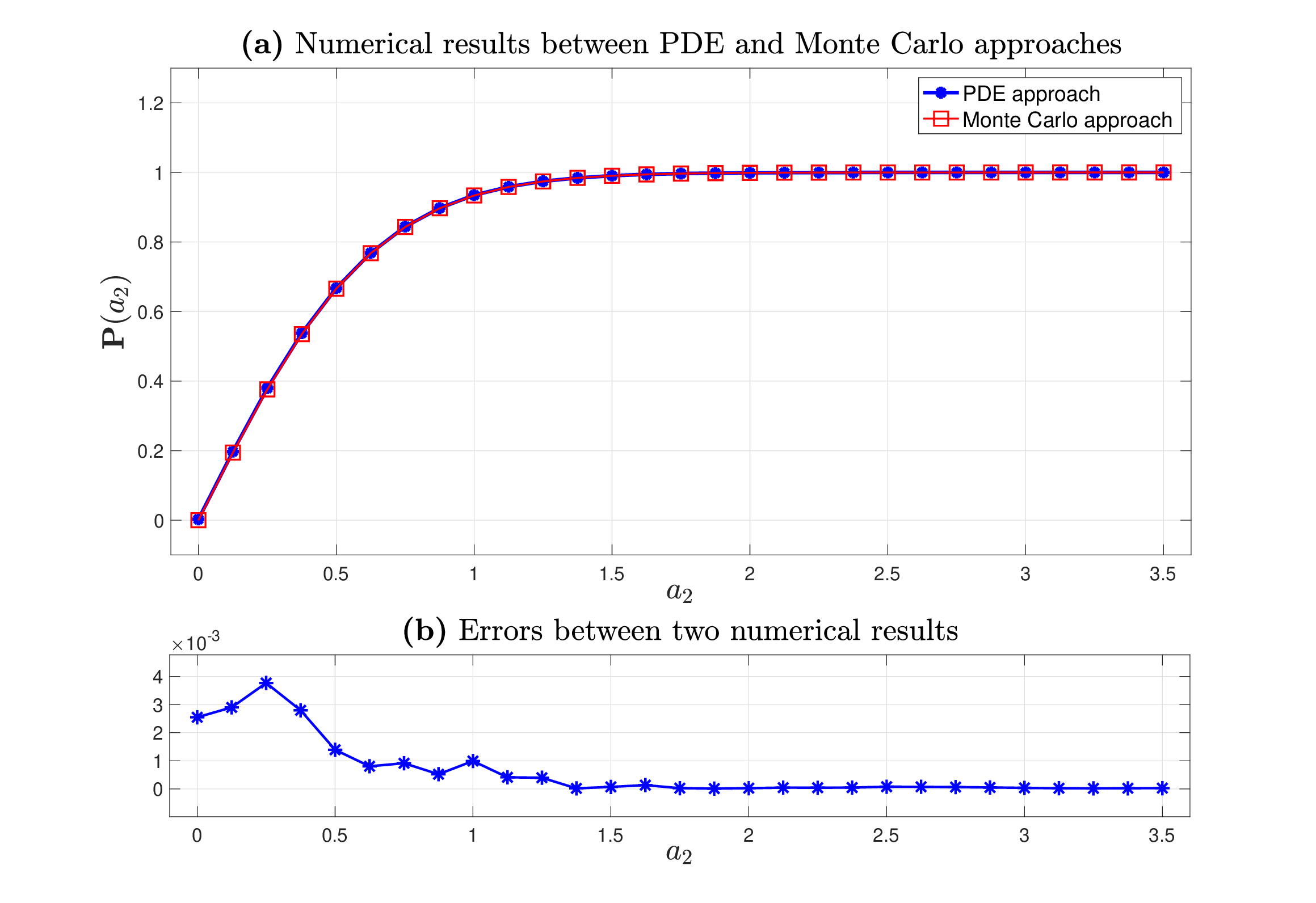}
	\caption{\footnotesize Probability of the serviceability in the limit state for $a_{2}\in[0,3.5]$.}
	\label{fig:example02}
\end{figure}

\subsection{Empirical convergence rate from the numerical case studies}
Since the exact solution is not always available, 
we employ numerical solutions computed on different meshes to estimate the empirical convergence rate 
of the PDE-based approach.   
To be precise, 
let $\mathcal{U}^{h}$ be the numerical solution on a uniform mesh with spacing $h$ and 
$\mathcal{U}^{\text{exact}}$ the exact solution, 
assume $p$ is the convergence rate for numerical schemes, then
\begin{align*}
	\|\mathcal{U}^{h} -  \mathcal{U}^{\text{exact}}\|_{\infty} \leq C h^{p},
\end{align*}
where $C$ is a constant independent of $h$. 
Moreover, assume that 
\begin{align*}
	\|\mathcal{U}^{h} -  \mathcal{U}^{\text{exact}}\|_{\infty} = C h^{p} + O(h^{p+\epsilon}),\quad \text{for~some~}\epsilon>0,
\end{align*}
then
\begin{align*}
	\frac{\|\mathcal{U}^{h} -  \mathcal{U}^{h/2}\|_{\infty}}{\|\mathcal{U}^{h/2} -  \mathcal{U}^{h/4}\|_{\infty}} 
	\approx 2^{p} + O(h^{\epsilon}). 
\end{align*}
With such a relation in mind, we test the convergence rate of the numerical scheme by considering
\begin{align*}
	p(h) \triangleq \log_{2}\left(\frac{\|\mathcal{U}^{h} -  \mathcal{U}^{h/2}\|_{\infty}}{\|\mathcal{U}^{h/2} -  \mathcal{U}^{h/4}\|_{\infty}}\right). 
\end{align*}
For more details, refer to \cite{bensoussan2021mathematical,roy2005review}.

In Tables \ref{Tab_01} and \ref{Tab_02}, we present a set of empirical estimations of $p(h)$ with respect to $x, y, z$ for Examples 1 and 2. 
The results show that the convergence rate is close to 2, 
which is consistent with the theoretical expectation of the second-order upwind difference scheme.

\begin{minipage}[t]{0.45\textwidth}
	\centering
	\captionof{table}{Computation $p(h)$ for Example 1, 
		with $\lambda = 10^{-3}$, $h_{x} = \frac{\lambda\bar{x}}{8}$, $h_{y} = \frac{\lambda\bar{y}}{8}$, 
		$h_{z} = \frac{\lambda b}{8}$, $\varepsilon_{0}=4\times 10^{-4}$, $a_{1} = 0$.}
	\begin{tabular}{c|c|c|c}
		\hline\hline
		\centering
		& $h = h_{x}$	& $h = h_{y}$& $h = h_{z}$ \\  \hline
		$\lambda\|u^{h} -  u^{h/2}\|_{\infty}$	&$0.0160$  & $0.0152$ & $0.0025$\\ \hline
		$\lambda\|u^{h/2} -  u^{h/4}\|_{\infty}$	&$0.0048$  & $0.0040$ & $0.0007$ \\ \hline
		$\lambda\|u^{h/4} -  u^{h/8}\|_{\infty}$	&$0.0011$  & $0.0010$ &  $0.0002$ \\ \hline\hline
		$p(h)$ & $1.7517$   & $1.9091$  & $1.9095$  \\ \hline
		$p(h/2)$ & $2.1561$   & $1.9856$  & $1.9185$  \\ \hline\hline
	\end{tabular}
	\label{Tab_01}
\end{minipage}
\hfill
\begin{minipage}[t]{0.45\textwidth}
	\centering
	\captionof{table}{Computation $p(h)$ for Example 2, 
		with $\lambda = 10^{-3}$, $h_{x} = \frac{\lambda\bar{x}}{8}$, $h_{y} = \frac{\lambda\bar{y}}{8}$, 
		$h_{z} = \frac{\lambda b}{8}$, $a_{2} = \frac{3}{8}$. }
	\begin{tabular}{c|c|c|c}
		\hline\hline
		\centering
		& $h = h_{x}$	& $h = h_{y}$& $h = h_{z}$ \\  \hline
		$\lambda\|u^{h} -  u^{h/2}\|_{\infty}$	&$0.0973$  & $0.0321$ & $0.0411$\\ \hline
		$\lambda\|u^{h/2} -  u^{h/4}\|_{\infty}$	&$0.0414$  & $0.0100$ & $0.0122$ \\ \hline
		$\lambda\|u^{h/4} -  u^{h/8}\|_{\infty}$	&$0.0124$  & $0.0025$ &  $0.0031$ \\ \hline\hline
		$p(h)$ & $1.2341$   & $1.6886$  & $1.7481$  \\ \hline
		$p(h/2)$ & $1.7332$   & $1.9660$  & $1.9836$  \\ \hline\hline
	\end{tabular}
	\label{Tab_02}
\end{minipage}

\section{Conclusions}\label{sec:conclusions}
In this work, we designed a PDE-based numerical method for computing the invariant measure 
and steady-state statistics of a white-noise-driven bilinear hysteresis oscillator. 
The approach offers an alternative to Monte Carlo simulations. 
By establishing a theoretical extension of the EPPO model and 
providing an appropriate Lyapunov function, 
we demonstrated the existence of an invariant probability measure, 
although the uniqueness remains a challenging problem beyond the scope of this paper. 
Applications of the proposed framework, 
including the calculation of threshold crossing frequency and serviceability probability, 
illustrate its practical utility for analyzing high-dimensional nonlinear and nonsmooth dynamical systems. 
Numerical results confirm the efficiency and promise of our strategy 
for broadening the scope of alternatives to traditional Monte Carlo methods.

\appendix
\section{Proofs of theorems}\label{app-A01}

\renewcommand{\thecorollary}{A.\arabic{corollary}}
\renewcommand{\theequation}{A.\arabic{equation}}
\setcounter{figure}{0}
\renewcommand{\thefigure}{A-\arabic{figure}}
\renewcommand{\thetheorem}{A.\arabic{theorem}}
\renewcommand{\thelemma}{A.\arabic{lemma}}

Firstly, let $\mathbf{X}$ solve SVI \eqref{SVIBEPO}, 
we note that for a smooth enough function $F$, 
\begin{align}
	F(t, \mathbf{X}(t))
	=& ~F(0, \mathbf{X}(0))
	+ \int_0^t \partial_s F(s, \mathbf{X}(s)) \d s
	+ \int_0^t \partial_x F(s,\mathbf{X}(s)) \d X(s)
	+ \int_0^t \partial_y F(s,\mathbf{X}(s)) \d Y(s) \label{Ito-formula}\\
	&+ \frac{\sigma^2}{2}\int_0^t \partial^{2}_{y} F(s,\mathbf{X}(s)) \d s
	+ \underbrace{\int_0^t \partial_z F(s,\mathbf{X}(s)) \d Z(s),}
	_{=\int_0^t \partial_z F(s,\mathbf{X}(s)) Y(s) \mathrm{d} s 
		- \int_0^t \partial_z F(s,\mathbf{X}(s)) \mathrm{d} \Delta(s)}\notag
\end{align}
where $Z(t) = X(t) - \Delta(t)$ and 
$\d\Delta(t) = \indic_{\{Z(t) = b\}} \max(0, Y(t)) \d t + \indic_{\{Z(t) = -b\}} \min(0, Y(t)) \d t$. 

{\noindent \it \bf Proof of Theorem \ref{Prop01_BEPO}.}  
We prove that if $\psi$ is a solution of \eqref{SVI-eq01}, 
then $\psi$ satisfies \eqref{SVI-sol01}. 
For any $t,\tau \in [0,T]$,  
by It\^{o}'s formula (regard $t$ as a constant and refer to \eqref{Ito-formula}), 
\begin{align*}
	&\psi(t - \tau, \mathbf{X}(\tau)) \\
	&= \psi(t, \mathbf{X}(0))  
	- \int_{0}^{\tau} \partial_{s}\psi \d s 
	+  \int_0^\tau \partial_x \psi \d X(s)
	+ \int_0^\tau\partial_y \psi \d Y(s)
	+ \frac{\sigma^2}{2}\int_0^\tau\partial^{2}_{y} \psi \d s\\
	& ~~+ \int_0^\tau Y(s) \partial_z \psi \mathrm{d} s 
	- \int_0^\tau \partial_z \psi 
	\indic_{\{Z(s) = b\}} \max(0, Y(s)) \d s 
	-  \int_0^\tau \partial_z \psi \indic_{\{Z(s) = -b\}} \min(0, Y(s)) \d s. 
\end{align*}
Applying 
\begin{align*}
	&1 = \indic_{\{|Z(s)| < b\}} + \indic_{\{Z(s) = b\}}  + \indic_{\{Z(s) = -b\}},\\
	&Y(s) - \max(0, Y(s)) =  \min(0, Y(s)),
\end{align*}
and noting that $\psi$ solves  \eqref{SVI-eq01}, then
\begin{align*}
	\psi(t - \tau, \mathbf{X}(\tau)) = \psi(t, \mathbf{X}(0))  
	+ \text{mean-zero~term}.
\end{align*}
Since $\tau$ is arbitrary, let $\tau = t$, 
take the expectation conditional on $\mathbf{X}(0) = \mathbf{x}$, 
\begin{align*}
	\E[\psi(0 , \mathbf{X}(t)) |  \mathbf{X}(0) = \mathbf{x}]
	=\E[\psi(t, \mathbf{X}(0)) | \mathbf{X}(0) = \mathbf{x} ],
\end{align*}
that is 
$\E[g(\mathbf{X}(t)) | \mathbf{X}(0) = \mathbf{x}] = \psi(t, \mathbf{x})$. 	\qed

Especially, if the function $g$ in Theorem \ref{Prop01_BEPO} is replaced by an indicator function, 
we can obtain the following corollary. 
\begin{corollary}
	Let $\mathcal{B}$ be a Borel $\sigma$-field on $\bar{\mathcal{D}}$, 
	a function $u(s, \mathbf{x})$ given by
	\begin{align*}
		u(s, \mathbf{x}) 
		= \E[\indic_{\mathcal{B}}(\mathbf{X}(t)) | \mathbf{X}(s) = \mathbf{x} ] 
		= \mathbb{P}_{(s,\mathbf{x})}(\mathbf{X}(t)\in \mathcal{B}),
	\end{align*}
	with $0\leq s\leq t$ and $\mathbf{x}\in \bar{\D}$, is a solution of the equations
	\begin{align*}
		\begin{cases}
			\partial_{s} u  + \AA u = 0,~~\text{in~} [0,t)\times \D,\\
			\partial_{s} u + \BB_{\pm} u = 0,~~\text{in~}[0,t)\times \D^{\pm},\\
			u(t,\mathbf{x}) = \indic_{\mathcal{B}}(\mathbf{x}). 
		\end{cases}
	\end{align*}
\end{corollary}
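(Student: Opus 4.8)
The plan is to deduce this corollary directly from Theorem~\ref{Prop01_BEPO} by exploiting the time-homogeneity of the SVI \eqref{SVIBEPO}, whose coefficients are autonomous (they do not depend on $t$). First I would note that $\indic_{\mathcal{B}}$ is bounded and measurable; the argument behind Theorem~\ref{Prop01_BEPO} only uses boundedness of the payoff (it enters through the mean-zero stochastic integrals obtained from the change-of-variables formula \eqref{Ito-formula}), so its conclusion applies with $g = \indic_{\mathcal{B}}$. If one insists on staying within the stated hypothesis $g\in L^2(\bar{\D})$, one instead replaces $\indic_{\mathcal{B}}$ by a smooth, compactly supported approximant $g_\varepsilon\in L^2(\bar{\D})$, applies Theorem~\ref{Prop01_BEPO} to each $g_\varepsilon$, and passes to the limit at the end by dominated convergence. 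Either way, one obtains a (sufficiently regular) function $\psi(r,\mathbf{x}) = \E[\indic_{\mathcal{B}}(\mathbf{X}(r))\mid \mathbf{X}(0)=\mathbf{x}]$ solving $\partial_r\psi - \AA\psi = 0$ in $(0,T]\times\D$, $\partial_r\psi - \BB_{\pm}\psi = 0$ in $(0,T]\times\D^{\pm}$, with $\psi(0,\cdot)=\indic_{\mathcal{B}}$, for any fixed $T\ge t$.

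Next I would invoke time-homogeneity: the law of $\mathbf{X}$ started from $\mathbf{x}$ at time $s$ and observed at time $t$ equals the law of $\mathbf{X}$ started from $\mathbf{x}$ at time $0$ and observed at time $t-s$, so
\begin{align*}
	u(s,\mathbf{x}) = \P_{(s,\mathbf{x})}(\mathbf{X}(t)\in\mathcal{B}) = \E[\indic_{\mathcal{B}}(\mathbf{X}(t-s))\mid \mathbf{X}(0)=\mathbf{x}] = \psi(t-s,\mathbf{x}), \qquad 0\le s\le t.
\end{align*}
The corollary then follows by a change of the time variable. Setting $r=t-s$ and differentiating, $\partial_s u(s,\mathbf{x}) = -(\partial_r\psi)(t-s,\mathbf{x})$, while the spatial operators $\AA$ and $\BB_{\pm}$ commute with the shift $s\mapsto t-s$; hence $\partial_s u + \AA u = -(\partial_r\psi - \AA\psi)(t-s,\cdot) = 0$ on $[0,t)\times\D$, and likewise $\partial_s u + \BB_{\pm}u = 0$ on $[0,t)\times\D^{\pm}$ (the half-open interval in $s$ matches the half-open interval in $r$ under the reflection, using $T\ge t$). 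At $s=t$, i.e.\ $r=0$, we get $u(t,\mathbf{x})=\psi(0,\mathbf{x})=\indic_{\mathcal{B}}(\mathbf{x})$, the required terminal condition. If the mollification route is taken, one additionally checks $u_\varepsilon(s,\mathbf{x}) = \E[g_\varepsilon(\mathbf{X}(t-s))\mid \mathbf{X}(0)=\mathbf{x}]\to u(s,\mathbf{x})$ pointwise and that the limiting function still solves the system.

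The only genuinely delicate point is regularity. The change-of-variables identity \eqref{Ito-formula} that underpins Theorem~\ref{Prop01_BEPO} requires the test function to be smooth enough, whereas $\indic_{\mathcal{B}}$ is merely bounded measurable; so one really must run the smoothing step and then argue that the family of PDE solutions converges to a solution of the limiting problem — appealing, as is implicit throughout the paper, to interior hypoelliptic/parabolic smoothing in the nondegenerate $y$-direction together with the first-order transport structure in $x$ and $z$. Once that is granted, the rest is the routine time-reversal bookkeeping sketched above, and there is no further analytic or combinatorial obstacle. (A self-contained alternative would be to apply \eqref{Ito-formula} to $F=u$ over $[s,t]$ and force the drift terms to vanish via the Markov/tower property, but this merely re-derives Theorem~\ref{Prop01_BEPO}, so the reduction above is cleaner.)
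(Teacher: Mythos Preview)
Your proposal is correct and takes essentially the same approach as the paper: the paper simply remarks that the corollary follows from Theorem~\ref{Prop01_BEPO} by taking $g=\indic_{\mathcal{B}}$, without spelling out the time-reversal $u(s,\mathbf{x})=\psi(t-s,\mathbf{x})$ or the regularity caveat, both of which you make explicit. Your added discussion of mollifying $\indic_{\mathcal{B}}$ is more careful than what the paper provides, but the underlying idea is the same.
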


{\noindent \it \bf Proof of Theorem \ref{prop-03}.}  
We consider the system in the elastic regime, for the plastic regime, the same result can be obtained. 
By It\^{o}'s formula, 
\begin{align*}
	\frac{\d }{\d t}\E[X^{2}(t)] &= 2\E[X(t)Y(t)], \\
	\frac{\d }{\d t}\E[Y^{2}(t)] &= 2\E[Y(t)\mathfrak{f}(X(t), Y(t)) -k(1-\alpha)Y(t)Z(t) -k\alpha X(t)Y(t)] + \sigma^{2},\\
	\frac{\d }{\d t}\E[X(t)Y(t)] &= \E[Y^{2}(t) + X(t)\mathfrak{f}(X(t), Y(t)) - k(1-\alpha)X(t)Z(t) -k\alpha X^{2}(t)]. 
\end{align*}
Recall that 
\begin{align*}
	V(x, y)  = \left(k\alpha +\frac{c_{0}d_{0}}{2} - c_{1}\right)x^{2} + y^{2} + c_{0} xy. 
\end{align*}
By using the assumptions \eqref{assum-01} and \eqref{assum-02}, 
and the Young's inequality, we have
\begin{align*}
	&\frac{\d }{\d t}\E[V(X(t), Y(t))] \\
	& \leq  \E[-c_{0} Y^{2}(t) - c_{0}(k\alpha - d_{1})X^{2}(t) - 2k(1-\alpha)Y(t)Z(t) - c_{0}k(1-\alpha) X(t)Z(t)] 
	+ \hat{C}\\
	&\leq -\frac{c_{0}}{2} \E[ (k\alpha - d_{1})X^{2}(t) + Y^{2}(t)] + C,
\end{align*}
where $\hat{C} = \sigma^{2} + c_{0}d_{2} + 2 c_{2}$, 
$C = \hat{C} + (kb(1-\alpha))^{2}\left(\frac{2}{c_{0}} + \frac{c_{0}}{2(k\alpha - d_{1})}\right)$. 
Moreover, since
\begin{align*}
	V(x, y) \leq \left(k\alpha +\frac{c_{0}d_{0}}{2} + \frac{c^{2}_{0}}{2} - c_{1}\right)x^{2} + \frac{3}{2}y^{2}, ~\forall x,y\in\R,
\end{align*}
then
\begin{align*}
	C_{1}V(x, y) \leq \frac{c_{0}}{2}[(k\alpha - d_{1})x^{2} + y^{2}], 
\end{align*}
where $C_{1} = \min\left(\frac{c_{0}}{3}, \frac{c_{0}(k\alpha - d_{1})}{2k\alpha +c_{0}d_{0} + c^{2}_{0} - 2c_{1}}\right)$. 
Therefore,
\begin{align*}
	\frac{\d }{\d t}\E[V(X(t), Y(t))] + C_{1}\E[V(X(t), Y(t))]  \leq C,
\end{align*}
which implies that 
\begin{align*}
	\E[V(X(t), Y(t))] &\leq V(X(0), Y(0))e^{-C_{1}t} + C(1-e^{-C_{1}t})/C_{1} \\
	&\leq V(X(0), Y(0))+ C/C_{1},~~~\forall t>0. 
\end{align*} \qed

Define the probability law $\mu_{_T}$ on $(\bar{\mathcal{D}}, \mathcal{B})$ associated with $T>0$ by
\begin{align*}
	\mu_{_T}(\phi) \triangleq \frac{1}{T}\int_{0}^{T}\E[\phi (\mathbf{X}(t))]\d t 
	=  \frac{1}{T}\int_{0}^{T} \mu(0) (P_{t}\phi) \d t, ~~\text{for}~\phi\in	C_{b}(\bar{\mathcal{D}}). 
\end{align*}
In fact, $\mu_{_T}$ corresponds to the ergodic mean. 
(For the definition of ergodic, refer to \cite{PK92}, p. 154.)

To prove Theorem \ref{Thm-01}, the following lemma is necessary. 
\begin{lemma}\label{lemma-02}
	The sequence $\{\mu_{_{T_{n}}}\}_{n\geq 1}$ is tight, for any $T_{n}\uparrow \infty$. 
\end{lemma}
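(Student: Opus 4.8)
\noindent\textbf{Proof proposal for Lemma \ref{lemma-02}.}
The plan is to exploit the Lyapunov function $V$ of Theorem \ref{prop-03} together with the fact that the $Z$-component of $\mathbf{X}(t)$ is confined to the compact interval $[-b,b]$, so that tightness of $\{\mu_{_{T_n}}\}$ on $\bar{\mathcal{D}} = \R\times\R\times[-b,b]$ reduces to a bound, uniform in $n$, on the second moments of the $(X,Y)$-marginals. First I would observe that $V(x,y) = (k\alpha + \tfrac{c_0 d_0}{2} - c_1)x^2 + y^2 + c_0 xy$ is a positive definite quadratic form: by \eqref{assum-01}--\eqref{assum-02} one has $c_1 \le k\alpha$ and $d_0 \ge c_0>0$, so its $x^2$-coefficient is at least $\tfrac{c_0 d_0}{2} \ge \tfrac{c_0^2}{2} > \tfrac{c_0^2}{4}$, whence the discriminant is negative. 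Consequently there is $\kappa>0$ with $V(x,y) \ge \kappa(x^2+y^2)$ for all $(x,y)\in\R^2$.

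Next, taking the initial law $\mu(0)=\delta_{\mathbf{x}}$ fixed in the definition preceding the lemma (any $\mu(0)$ with $\mu(0)(V)<\infty$ works equally well), Theorem \ref{prop-03} gives $\E[V(X(t),Y(t))] \le V(x,y) + C/C_1 =: M_{\mathbf{x}}$ uniformly in $t>0$. Applying this bound to the truncations $V\wedge N$, integrating over $[0,T_n]$, dividing by $T_n$, and letting $N\to\infty$ by monotone convergence yields $\mu_{_{T_n}}(V) \le M_{\mathbf{x}}$ for every $n$. Combined with the coercivity estimate and Chebyshev's inequality, for any $R>0$,
\[
\mu_{_{T_n}}\big(\{(x,y,z): x^2+y^2 \ge R^2\}\big) \;\le\; \frac{\mu_{_{T_n}}(V)}{\kappa R^2} \;\le\; \frac{M_{\mathbf{x}}}{\kappa R^2},
\qquad \forall\, n.
\]

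Finally, given $\varepsilon>0$, I would choose $R$ so large that $M_{\mathbf{x}}/(\kappa R^2) < \varepsilon$; then the set $K_R \triangleq \{(x,y,z): x^2+y^2 \le R^2,\ |z|\le b\}$ is compact in $\bar{\mathcal{D}}$ and $\mu_{_{T_n}}(K_R) \ge 1-\varepsilon$ for all $n$, which is precisely tightness. The only genuine obstacle is establishing the positive definiteness (hence coercivity) of $V$ from the structural assumptions on $\mathfrak{f}$; once that is in place, everything else is the uniform energy estimate already in hand plus Chebyshev's inequality, with the minor technical point — handled by the truncation argument above — that $\mu_{_{T_n}}$ must be evaluated on the unbounded function $V$.
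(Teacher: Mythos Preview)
Your proposal is correct and follows essentially the same route as the paper: both combine the uniform Lyapunov bound of Theorem \ref{prop-03} with the coercivity of $V$ and Chebyshev's inequality, the only cosmetic difference being that the paper uses the explicit decomposition $V(x,y)=\mathfrak{h}^{2}x^{2}+(\tfrac{c_{0}}{2}x+y)^{2}$ (with $\mathfrak{h}^{2}=k\alpha+\tfrac{c_{0}d_{0}}{2}-c_{1}-\tfrac{c_{0}^{2}}{4}>0$) to define its compact set, whereas you invoke positive definiteness abstractly and take the Euclidean ball $\{x^{2}+y^{2}\le R^{2}\}\times[-b,b]$. Your truncation step $V\wedge N$ is a nice extra care the paper omits.
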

 
\begin{proof}
	We need to prove that for any $\varepsilon>0$ and any $n\in\N$, 
	there exists a compact set $\mathcal{K}_{\varepsilon}\in\mathcal{B}$, 
	such that
	\begin{align*}
		\mu_{_{T_{n}}}(\indic_{\mathcal{K}_{\varepsilon}})\geq 1-\varepsilon. 
	\end{align*}
	Notice that 
	\begin{align*}
		\mu_{_{T_{n}}}(\indic_{\mathcal{K}_{\varepsilon}})
		= \frac{1}{T_{n}}\int_{0}^{T_{n}}\E[\indic_{\mathcal{K}_{\varepsilon}}(\mathbf{X}(t))]\d t 
		= \frac{1}{T_{n}}\int_{0}^{T_{n}}\P (\mathbf{X}(t) \in \mathcal{K}_{\varepsilon}) \d t. 
	\end{align*}
	Therefore,
	\begin{align*}
		1 - \mu_{_{T_{n}}}(\indic_{\mathcal{K}_{\varepsilon}})
		= \frac{1}{T_{n}}\int_{0}^{T_{n}}\P (\mathbf{X}(t) \notin \mathcal{K}_{\varepsilon}) \d t. 
	\end{align*}
	We then show that the set 
	\begin{align*}
		\mathcal{K}_{\delta} \triangleq 
		\left\{ (x,y,z) : \mathfrak{h} |x| + \Big|\frac{c_{0}}{2}x + y\Big| \leq \frac{1}{\delta}, |z| \leq b\right\} 
	\end{align*}
	is appropriate, where $\delta = \frac{1}{2}\sqrt{\frac{\varepsilon}{ \mathfrak{c}}}$, 
	$ \mathfrak{c} = V(X(0), Y(0)) + C/C_{1}$, and 
	$\mathfrak{h} = \left( k\alpha + c_{0}d_{0}/2 - c_{1} - c_{0}^{2}/4\right)^{1/2}$. 
	To see this, by applying Theorem \ref{prop-03} and Chebyshev's inequality, we have
	\begin{align*}
		\P (\mathbf{X}(t) \notin \mathcal{K}_{\varepsilon}) 
		& \leq 
		\P\left(\mathfrak{h} |X(t)| > \frac{1}{2\delta}\right) + \P\left(\Big|\frac{c_{0}}{2}X(t) + Y(t)\Big| > \frac{1}{2\delta}\right)\\
		& \leq 4\delta^{2}\E\left[\mathfrak{h}^{2} |X(t)|^{2} +  \Big|\frac{c_{0}}{2}X(t) + Y(t)\Big|^{2}\right]\\
		& \leq 4\delta^{2} \mathfrak{c} = \varepsilon,
	\end{align*}
	which implies the result. 
\end{proof}

{\noindent \it \bf Proof of Theorem \ref{Thm-01}.}  
For any $\mu(0)$, $\{\mu_{_{T_{n}}}\}_{n\geq 1}$ is tight according to Lemma \ref{lemma-02}, 
hence there exists a subsequence $\{\mu_{_{T_{n_{\rk}}}}\}_{\rk\geq 1}$ that converges weakly to a certain measure $\mu$, 
i.e., for any $\phi\in C_{b}(\bar{\mathcal{D}})$, 
\begin{align*}
	\mu_{_{T_{n_{\rk}}}}(\phi) \xrightarrow[\rk]{ } \mu (\phi). 
\end{align*}
Then $\mu$ is invariant, since
\begin{align*}
	\mu(P_{t} \phi) & = \lim_{\rk}\mu_{_{T_{n_{\rk}}}}(P_{t}\phi)\\
	& = \lim_{\rk}\frac{1}{T_{n_{\rk}}}\int_{0}^{T_{n_{\rk}}} \mu(0)(P_{s}P_{t}\phi)\d s\\
	& = \lim_{\rk}\frac{1}{T_{n_{\rk}}}\int_{0}^{T_{n_{\rk}}} \mu(0)(P_{s+t}\phi)\d s\\
	& = \lim_{\rk}\frac{1}{T_{n_{\rk}}}\int_{t}^{t+ T_{n_{\rk}}} \mu(0)(P_{s}\phi)\d s\\
	& = \lim_{\rk}\frac{1}{T_{n_{\rk}}}\int_{0}^{T_{n_{\rk}}} \mu(0)(P_{s}\phi)\d s\\
	& = \mu(\phi). 
\end{align*} \qed

{\noindent \it \bf Proof of Corollary \ref{coro02}.}  
By the definition of invariant measure (refer to \cite{BA98}, pp. 55-56),
we have  
\begin{align*}
	\int_{\bar{\mathcal{D}}} \Lambda(\varphi) \d \mu = 0,~~\forall \varphi\in C_{b}(\bar{\mathcal{D}}), 
\end{align*}
where $\Lambda$ is the infinitesimal generator of $P_{t}$. 
That is also
\begin{align*}
	\int_{\mathcal{D}} \AA\varphi(x,y,z) \d \mu 
	+ \int_{\mathcal{D}^+} \BB_{+}\varphi(x,y,b) \d \mu
	+ \int_{\mathcal{D}^-} \BB_{-}\varphi(x,y,- b) \d \mu
	= 0. 
\end{align*}\qed

{\noindent \it \bf Proof of Theorem \ref{Thm-invariant-measure}.}  
(i) Consider the Laplace transformation to the solution of \eqref{SVI-eq01} in $\mathcal{D}$, 
denoted as $u_{\lambda}$: 
\begin{align*}
	u_{\lambda}(\mathbf{x}) = \int_{0}^{\infty} e^{-\lambda t}\psi(t,\mathbf{x}) \d t, ~~\text{for}~\lambda>0.
\end{align*}
Integration by parts gives that 
\begin{align*}
	\lambda u_{\lambda}(\mathbf{x}) 
	& = -e^{-\lambda t}\psi(t,\mathbf{x})\Big|_{t=0}^{\infty} + 
	\int_{0}^{\infty} e^{-\lambda t}\partial_{t} \psi(t,\mathbf{x}) \d t\\
	& = \psi(0, \mathbf{x}) + \int_{0}^{\infty}e^{-\lambda t}\AA \psi(t,\mathbf{x}) \d t.
\end{align*}
Note that the operator $\AA$ is independent of $t$, we get
\begin{align*}
	\lambda u_{\lambda}(\mathbf{x}) 
	= g(\mathbf{x}) + \AA u_{\lambda}(\mathbf{x}). 
\end{align*}
Using the same way, on the boundaries $\mathcal{D}^\pm$, we have 
\begin{align*}
	\lambda u_{\lambda}(x,y,\pm b) 
	- \BB_{\pm} u_{\lambda}(x,y,\pm b)
	= g(x,y,\pm b). 
\end{align*}
The Laplace transformation implies the boundedness:
\begin{align*}
	|u_{\lambda }(\mathbf{x})| 
	\leq \sup_{[0,T]\times \bar{\mathcal{D}}} |\psi(t, \mathbf{x})| \int_{0}^{\infty}e^{-\lambda t}\d t
	\leq \frac{\| g \|_{\infty}}{\lambda}. 
\end{align*}

(ii) Thanks to the final value theorem for the Laplace transformation, 
\begin{align*}
	\lim_{\lambda \rightarrow 0}\lambda u_{\lambda}(\mathbf{x}) 
	= \lim_{t\rightarrow \infty} \psi(t, \mathbf{x}). 
\end{align*}
Hence,
\begin{align*}
	\lim_{\lambda \rightarrow 0}\lambda u_{\lambda}(\mathbf{x}) 
	= \lim_{t\rightarrow \infty} \E[g(\mathbf{X}(t)) | \mathbf{X}(0) = \mathbf{x}]
	= \int_{\bar{\mathcal{D}}}g\d \mu,~\forall \mathbf{x}\in\bar{\mathcal{D}}.  
\end{align*} \qed

\section{Algorithm for the invariant measure}\label{app-A02}
\setcounter{equation}{0}
\renewcommand{\theequation}{B.\arabic{equation}}
\setcounter{figure}{0}
\renewcommand{\thefigure}{B-\arabic{figure}}

We first classify the grid $\cG $ on $\mathcal{D}_{XY}$ according to 
the following label subsets, as illustrated in Fig. \ref{fig:discretization}. 
\begin{align*}
	\bm{K} &\triangleq \Big\{(\ri,\rj,\rk): \ri\in\{2,\ldots, I-1\}, \rj\in \{2,\ldots, J-1\}, \rk\in\{2,\ldots, K-1\} \Big\}, \\
	\bm{R}_{-} & \triangleq  \Big\{(\ri,\rj,\rk): \ri\in\{2,\ldots, I-1\}, \rj\in \{2,\ldots, J-1\}, \rk\in\{1\}\Big\},\\
	\bm{R}_{+} & \triangleq \Big\{(\ri,\rj,\rk): \ri\in\{2,\ldots, I-1\}, \rj\in \{2,\ldots, J-1\}, \rk\in\{K\}\Big\},\\
	\bm{B}_{-} & \triangleq  \Big\{(\ri,\rj,\rk): \ri\in\{1\}, \rj\in \{2,\ldots, J-1\}, \rk\in\{2,\ldots, K-1\}\Big\},\\
	\bm{B}_{+} & \triangleq  \Big\{(\ri,\rj,\rk): \ri\in\{I\}, \rj\in \{2,\ldots, J-1\}, \rk\in\{2,\ldots, K-1\}\Big\}, \\
	\hat{\bm{B}}_{-} & \triangleq  \Big\{(\ri,\rj,\rk): \ri\in\{1\}, \rj\in \{2,\ldots, J-1\}, \rk\in\{1, K\}\Big\},\\
	\hat{\bm{B}}_{+} & \triangleq  \Big\{(\ri,\rj,\rk): \ri\in\{I\}, \rj\in \{2,\ldots, J-1\}, \rk\in\{1, K\}\Big\},\\
	\bm{G} &\triangleq \Big\{(\ri,\rj,\rk): \ri\in\{1,2,\ldots,I\}, \rj\in \{1,J\}, \rk\in\{1,2,\ldots, K\}\Big\}. 
\end{align*}
The discretization systems in Section \ref{sec:num-method} can be rewritten as the 
following linear system 
\begin{align*}
	M_{\lambda}\bm{v} = \tilde{\bm{g}}, 
\end{align*}
where $M_{\lambda}$ is a sparse $IJK \times IJK$ matrix. 
Precisely, let $l(\ri,\rj,\rk) \triangleq \rk + (\rj -1)K + (\ri-1)JK$, 
and let $(\tilde{\bm{g}})_{\ri,\rj,\rk} = 0$, $\forall (\ri,\rj,\rk) \in \bm{G}$, 
otherwise, $(\tilde{\bm{g}})_{\ri,\rj,\rk} = \tilde{g}(\tilde{x}_{\ri}, \tilde{y}_{\rj}, \tilde{z}_{\rk})$. 
\begin{remark}
	Note that $l(\ri,\rj,\rk) \triangleq \rk + (\rj -1)K + (\ri-1)JK$ is the 
	mixed-radix representation of an integer in the interval $[1, IJK]$, 
	and this representation is unique, for more details, refer to \cite{ST67} pp. 41--42. 
\end{remark}

\subsection{Second-order upwind scheme}
Since the second-order upwind scheme involves two adjacent upstream (or downstream) nodes, 
special treatment is required for finite difference nodes near the boundary to 
avoid referencing values outside the computational domain. 
Following the approach in \cite{vanka1987second}, 
this study reverts to the first-order upwind scheme for such boundary-adjacent points.
In this case,  
the elements of matrix $M_{\lambda}$ can be elaborated as follows. 

\vspace{0.5em}
\begin{enumerate}[label = (\arabic*)]
	\setlength{\itemsep}{4pt}
	\item For $(\ri,\rj,\rk) \in \bm{K}$, 
	\begin{align}
		M_{\lambda}(l(\ri,\rj,\rk), l(\ri -2,\rj,\rk)) 
		&= -\frac{\min(0,\tilde{y}_{\rj}/\lambda)}{2\delta_{\tilde{x}}}, \quad \text{if~}2<\ri < I-1, \label{2order-i1}\\
		M_{\lambda}(l(\ri,\rj,\rk), l(\ri +2,\rj,\rk)) 
		&= \frac{\max(0,\tilde{y}_{\rj}/\lambda)}{2\delta_{\tilde{x}}}, \quad \text{if~}2<\ri < I-1, \label{2order-i2}\\
		M_{\lambda}(l(\ri,\rj,\rk), l(\ri -1,\rj,\rk)) 
		&= (1 + \indic_{\{2< \ri < I-1\}})\frac{\min(0,\tilde{y}_{\rj}/\lambda)}{\delta_{\tilde{x}}}, \label{2order-i3}\\
		M_{\lambda}(l(\ri,\rj,\rk), l(\ri +1,\rj,\rk)) 
		&= -(1 + \indic_{\{2< \ri < I-1\}})\frac{\max(0,\tilde{y}_{\rj}/\lambda)}{\delta_{\tilde{x}}}, \label{2order-i4}\\
		M_{\lambda}(l(\ri,\rj,\rk), l(\ri ,\rj-2,\rk)) 
		&= -\frac{\min(0,\tilde{\beta}_{\ri,\rj,\rk})}{2\delta_{\tilde{y}}},\quad \text{if~}2<\rj < J-1, \label{2order-j1}\\
		M_{\lambda}(l(\ri,\rj,\rk), l(\ri ,\rj+2,\rk)) 
		&= \frac{\max(0, \tilde{\beta}_{\ri,\rj,\rk})}{2\delta_{\tilde{y}}}, \quad \text{if~}2<\rj < J-1, \label{2order-j2}\\
		M_{\lambda}(l(\ri,\rj,\rk), l(\ri ,\rj-1,\rk)) 
		&= - \frac{\lambda\sigma^{2}}{2\delta_{\tilde{y}}^{2}} + (1 + \indic_{\{2< \rj < J-1\}})\frac{\min(0,\tilde{\beta}_{\ri,\rj,\rk})}{\delta_{\tilde{y}}}, \label{2order-j3}\\
		M_{\lambda}(l(\ri,\rj,\rk), l(\ri ,\rj+1,\rk)) 
		&= - \frac{\lambda\sigma^{2}}{2\delta_{\tilde{y}}^{2}} - (1 + \indic_{\{2< \rj < J-1\}})\frac{\max(0,\tilde{\beta}_{\ri,\rj,\rk})}{\delta_{\tilde{y}}},\label{2order-j4}\\
		M_{\lambda}(l(\ri,\rj,\rk), l(\ri,\rj,\rk -2)) 
		&= -\frac{ \min(0,\tilde{y}_{\rj}/\lambda)}{2\delta_{\tilde{z}}}, \quad \text{if~}2<\rk < K-1, \label{2order-k1}\\
		M_{\lambda}(l(\ri,\rj,\rk), l(\ri,\rj,\rk + 2)) 
		&= \frac{ \max(0,\tilde{y}_{\rj}/\lambda)}{2\delta_{\tilde{z}}}, \quad \text{if~}2<\rk < K-1, \label{2order-k2}\\
		M_{\lambda}(l(\ri,\rj,\rk), l(\ri,\rj,\rk -1)) 
		&= (1 + \indic_{\{2< \rk < K-1\}})\frac{ \min(0,\tilde{y}_{\rj}/\lambda)}{\delta_{\tilde{z}}}, \label{2order-k3}\\
	    M_{\lambda}(l(\ri,\rj,\rk), l(\ri,\rj,\rk + 1)) 
		&= -(1 + \indic_{\{2< \rk < K-1\}})\frac{ \max(0,\tilde{y}_{\rj}/\lambda)}{\delta_{\tilde{z}}}, \label{2order-k4}\\
		M_{\lambda}(l(\ri,\rj,\rk), l(\ri,\rj,\rk)) 
		&= 1 + \frac{\lambda\sigma^{2}}{\delta_{\tilde{y}}^{2}} 
		+ (2 +\indic_{\{2< \rj < J-1\}})\frac{|\tilde{\beta}_{\ri,\rj,\rk}|}{2\delta_{\tilde{y}}}\\
		&~~~~+ (2 +\indic_{\{2< \ri < I-1\}})\frac{|\tilde{y}_{\rj}/\lambda|}{2\delta_{\tilde{x}}} 
		+ (2 +\indic_{\{2< \rk < K-1\}})\frac{|\tilde{y}_{\rj}/\lambda|}{2\delta_{\tilde{z}}}. \notag
	\end{align}
	\item  For $(\ri,\rj,\rk) \in \bm{R}_{-}$, 
	elements \eqref{2order-i1}--\eqref{2order-j4} and \eqref{2order-k2} match the form given in case (1), 
	but 
	\begin{align*}
		M_{\lambda}(l(\ri,\rj,\rk), l(\ri,\rj,\rk+1)) 
		&= -\frac{2\max(0, \tilde{y}_{\rj}/\lambda)}{\delta_{\tilde{z}}}, \\
		M_{\lambda}(l(\ri,\rj,\rk), l(\ri,\rj,\rk)) 
		&= 1 + \frac{\lambda\sigma^{2}}{\delta_{\tilde{y}}^{2}} 
		+ (2 +\indic_{\{2< \rj < J-1\}})\frac{|\tilde{\beta}_{\ri,\rj,\rk}|}{2\delta_{\tilde{y}}}
		+ (2 +\indic_{\{2< \ri < I-1\}})\frac{|\tilde{y}_{\rj}/\lambda|}{2\delta_{\tilde{x}}} \\
		&~~~~+ \frac{3\max(0, \tilde{y}_{\rj}/\lambda)}{2\delta_{\tilde{z}}}.
	\end{align*}
	\item For $(\ri,\rj,\rk) \in \bm{R}_{+}$, 
	elements \eqref{2order-i1}--\eqref{2order-k1} match the form given in case (1), 
	but 
	\begin{align*}
		M_{\lambda}(l(\ri,\rj,\rk), l(\ri,\rj,\rk-1)) 
		&= \frac{2\min(0, \tilde{y}_{\rj}/\lambda)}{\delta_{\tilde{z}}}, \\
		M_{\lambda}(l(\ri,\rj,\rk), l(\ri,\rj,\rk)) 
		&= 1 + \frac{\lambda\sigma^{2}}{\delta_{\tilde{y}}^{2}} 
		+ (2 +\indic_{\{2< \rj < J-1\}})\frac{|\tilde{\beta}_{\ri,\rj,\rk}|}{2\delta_{\tilde{y}}}
		+ (2 +\indic_{\{2< \ri < I-1\}})\frac{|\tilde{y}_{\rj}/\lambda|}{2\delta_{\tilde{x}}} \\
		&~~~~ - \frac{3\min(0, \tilde{y}_{\rj}/\lambda)}{2\delta_{\tilde{z}}}. 
	\end{align*}
	\item For $(\ri,\rj,\rk) \in \bm{B}_{-}$, 
	elements \eqref{2order-i2} and \eqref{2order-j1}--\eqref{2order-k4} remain unchanged from case (1), 
	but 
	\begin{align*}
		M_{\lambda}(l(\ri,\rj,\rk), l(\ri+1,\rj,\rk)) 
		&= -\frac{2\max(0, \tilde{y}_{\rj}/\lambda)}{\delta_{\tilde{x}}}, \\
		M_{\lambda}(l(\ri,\rj,\rk), l(\ri,\rj,\rk)) 
		&= 1 + \frac{\lambda\sigma^{2}}{\delta_{\tilde{y}}^{2}} 
		+ (2 +\indic_{\{2< \rj < J-1\}})\frac{|\tilde{\beta}_{\ri,\rj,\rk}|}{2\delta_{\tilde{y}}}
		+ \frac{3\max(0, \tilde{y}_{\rj}/\lambda)}{2\delta_{\tilde{x}}} \\
		&~~~~+ (2 +\indic_{\{2< \rk < K-1\}})\frac{|\tilde{y}_{\rj}/\lambda|}{2\delta_{\tilde{z}}}. 
	\end{align*}
	\item For $(\ri,\rj,\rk) \in \bm{B}_{+}$, 
	elements \eqref{2order-i1} and \eqref{2order-j1}--\eqref{2order-k4} remain unchanged from case (1), 
	but 
	\begin{align*}
		M_{\lambda}(l(\ri,\rj,\rk), l(\ri-1,\rj,\rk)) 
		&= \frac{2\min(0, \tilde{y}_{\rj}/\lambda)}{\delta_{\tilde{x}}}, \\
		M_{\lambda}(l(\ri,\rj,\rk), l(\ri,\rj,\rk)) 
		&=1 + \frac{\lambda\sigma^{2}}{\delta_{\tilde{y}}^{2}} 
		+ (2 +\indic_{\{2< \rj < J-1\}})\frac{|\tilde{\beta}_{\ri,\rj,\rk}|}{2\delta_{\tilde{y}}}
		- \frac{3\min(0, \tilde{y}_{\rj}/\lambda)}{2\delta_{\tilde{x}}} \\
		&~~~~+ (2 +\indic_{\{2< \rk < K-1\}})\frac{|\tilde{y}_{\rj}/\lambda|}{2\delta_{\tilde{z}}}. 
	\end{align*}
	\item For $(\ri,\rj,\rk) \in \{(\ri,\rj,\rk)\in \hat{\bm{B}}_{-}: \rk = 1\}$, 
	use case (1) for \eqref{2order-j1}--\eqref{2order-j4}, 
	case (2) for \eqref{2order-k2} and \eqref{2order-k4}, 
	and case (4) for \eqref{2order-i2} and \eqref{2order-i4}, 
	but 
	\begin{align*}
		M_{\lambda}(l(\ri,\rj,\rk), l(\ri,\rj,\rk)) 
		=1 + \frac{ \lambda\sigma^{2}}{\delta_{\tilde{y}}^{2}} 
		+ (2 +\indic_{\{2< \rj < J-1\}})\frac{|\tilde{\beta}_{\ri,\rj,\rk}|}{2\delta_{\tilde{y}}}
		+ \frac{3\max(0, \tilde{y}_{\rj}/\lambda)}{2\delta_{\tilde{x}}} 
		+ \frac{3\max(0, \tilde{y}_{\rj}/\lambda)}{2\delta_{\tilde{z}}}.
	\end{align*}
	\item For $(\ri,\rj,\rk) \in \{(\ri,\rj,\rk)\in \hat{\bm{B}}_{+}: \rk = 1\}$, 
	use case (1) for \eqref{2order-j1}--\eqref{2order-j4}, 
	case (2) for \eqref{2order-k2} and \eqref{2order-k4}, 
	and case (5) for \eqref{2order-i1} and \eqref{2order-i3}, 
	but 
	\begin{align*}
		M_{\lambda}(l(\ri,\rj,\rk), l(\ri,\rj,\rk)) 
		= 1 + \frac{\lambda\sigma^{2}}{\delta_{\tilde{y}}^{2}} 
		+ (2 +\indic_{\{2< \rj < J-1\}})\frac{|\tilde{\beta}_{\ri,\rj,\rk}|}{2\delta_{\tilde{y}}}
		- \frac{3\min(0, \tilde{y}_{\rj}/\lambda)}{2\delta_{\tilde{x}}} 
		+ \frac{3\max(0, \tilde{y}_{\rj}/\lambda)}{2\delta_{\tilde{z}}}.  
	\end{align*}
	\item For $(\ri,\rj,\rk) \in \{(\ri,\rj,\rk)\in \hat{\bm{B}}_{-}: \rk = K\}$, 
	use case (1) for \eqref{2order-j1}--\eqref{2order-j4}, 
	case (3) for \eqref{2order-k1} and \eqref{2order-k3}, 
	case (4) for \eqref{2order-i2} and \eqref{2order-i4}, 
	but 
	\begin{align*}
		M_{\lambda}(l(\ri,\rj,\rk), l(\ri,\rj,\rk)) 
		= 1 + \frac{ \lambda\sigma^{2}}{\delta_{\tilde{y}}^{2}} 
		+ (2 +\indic_{\{2< \rj < J-1\}})\frac{|\tilde{\beta}_{\ri,\rj,\rk}|}{2\delta_{\tilde{y}}}
		+ \frac{3\max(0, \tilde{y}_{\rj}/\lambda)}{2\delta_{\tilde{x}}} 
		- \frac{3\min(0, \tilde{y}_{\rj}/\lambda)}{2\delta_{\tilde{z}}}.  
	\end{align*}
	\item For $(\ri,\rj,\rk) \in \{(\ri,\rj,\rk)\in \hat{\bm{B}}_{+}: \rk = K\}$, 
	use case (1) for \eqref{2order-j1}--\eqref{2order-j4}, 
	case (3) for \eqref{2order-k1} and \eqref{2order-k3}, 
	and case (5) for \eqref{2order-i1} and \eqref{2order-i3},  
	but
	\begin{align*}
		M_{\lambda}(l(\ri,\rj,\rk), l(\ri,\rj,\rk)) 
		= 1 + \frac{ \lambda\sigma^{2}}{\delta_{\tilde{y}}^{2}} 
		+ (2 +\indic_{\{2< \rj < J-1\}})\frac{|\tilde{\beta}_{\ri,\rj,\rk}|}{2\delta_{\tilde{y}}}
		- \frac{3\min(0, \tilde{y}_{\rj}/\lambda)}{2\delta_{\tilde{x}}} 
		- \frac{3\min(0, \tilde{y}_{\rj}/\lambda)}{2\delta_{\tilde{z}}}.  
	\end{align*}
	\item For $(\ri,\rj,\rk) \in \bm{G}$, 
	\begin{align*}
		&M_{\lambda}(l(\ri,\rj,\rk), l(\ri,\rj,\rk)) = -\frac{1}{\delta_{\tilde{y}}}, \quad 
		M_{\lambda}(l(\ri,\rj,\rk), l(\ri,\rj+1,\rk)) = \frac{1}{\delta_{\tilde{y}}}, \quad \text{if~} \rj = 1,\\ 
		&M_{\lambda}(l(\ri,\rj,\rk), l(\ri,\rj-1,\rk)) = -\frac{1}{\delta_{\tilde{y}}}, \quad 
		M_{\lambda}(l(\ri,\rj,\rk), l(\ri,\rj,\rk)) = \frac{1}{\delta_{\tilde{y}}}, \quad \text{if~} \rj = J.
	\end{align*}
\end{enumerate}

\section{Rice's formula}\label{app-A03}
\setcounter{equation}{0}
\renewcommand{\theequation}{C.\arabic{equation}}
\renewcommand{\thedefinition}{C.\arabic{definition}}
\setcounter{figure}{0}
\renewcommand{\thefigure}{C-\arabic{figure}}

Before introduce the Rice's formula, we need the following definition. 
\begin{definition}[Upcrossing and downcrossing \cite{L66}]
	For an almost sure (a.s.) continuous process $\zeta = \{\zeta(t), t\geq 0\}$, 
	we say it has a upcrossing of $a\in\R$ at $t_{0}>0$, 
	if for some $\varepsilon>0$, $\zeta(t)\leq a$ in $(t_{0}-\varepsilon, t_{0})$ and 
	$\zeta(t) \geq a$ in $(t_{0}, t_{0}+\varepsilon)$. 
	The downcrossing of the level $a$ is similarly defined by reversing inequalities. 
	
	For the a.s. differentiable process $\zeta$, it has a upcrossing (downcrossing) of $a$ at $t_{0}>0$, 
	if $\dot{\zeta}(t_{0}) > 0$ ($\dot{\zeta}(t_{0}) < 0$).  
\end{definition}

Next, we introduce the Rice's formula. 
Let $\eta = \{\eta(t), t\geq 0\}$ be a continuous and strictly stationary process. 
The distribution of $\eta(0)$, denoted by $F(x) = \P(\eta(0)\leq x)$, 
is assumed to be a continuous function. 
For a fixed $a\in \R$, let $N^{+}(a)$ be the number of upcrossing of level $a$ by $\eta$ on $[0,1]$, 
the intensity of upcrossing can be evaluated using the relation 
\begin{align*}
	\nu^{+}(a) \triangleq \E[N^{+}(a)] = \lim_{q\downarrow 0}\frac{1}{q}\P(\eta(0) < a < \eta(q)).
\end{align*}
The proof can be found in \cite{LLR83} Lemma 7.2.2. 
If the process $\eta$ is Gaussian and the derivative $\dot{\eta}(t)$ exists in a quadratic mean, 
then Rice's formula states that 
\begin{align*}
	\nu^{+}(a) = \int_{0}^{\infty} y p(a,y)\d y,
\end{align*}
where $p(a,y)$ is the joint density of $\eta(0)$ and $\dot{\eta}(0)$ (refer to \cite{LLR83} Theorem 7.2.4). 
A more general form of Rice's formula for the intensity of crossing a given level $a$:
\begin{align*}
	\nu(a) = \int_{-\infty}^{\infty} |y| p(a,y)\d y,
\end{align*}
where $\nu(a) = \nu^{+}(a) + \nu^{-}(a)$, and $ \nu^{-}(a)$ is the downcrossing intensity.

\section{Parameter settings for two numerical examples}\label{app-A04}
\setcounter{equation}{0}
\renewcommand{\theequation}{D.\arabic{equation}}
\setcounter{figure}{0}
\renewcommand{\thefigure}{D-\arabic{figure}}

Here, 
we introduce the parameter settings in the two examples in Section \ref{sec:num-results}.
For all numerical experiments, 
we utilized a MacBook Pro equipped with an Apple M4 Pro processor (14 -- core CPU configuration) 
and 24 GB of unified memory. 
Computations were executed in MATLAB R2024b. 

\vspace{0.5em}
\noindent{\bf{PDE approach} (by applying the second-order upwind scheme)}
\vspace{0.5em}

To obtain solutions of the two PDEs in Section \ref{sec:num-results}, 
we numerically solve the linear systems from Section \ref{sec:num-method}. 
In particular, we solve the linear systems by using the generalized minimal residuals (GMRES) 
algorithm implemented in MATLAB's {\bf{gmres}} function, 
and precondition each system with incomplete LU factorization. 

{\renewcommand
	\arraystretch{1.2}
	\tabcolsep = 3em
	\begin{center}
		\begin{tabular}{c|cc}
			\hline\hline
			\centering
			{\bf{Parameters}} 		& {\bf{Example 1}}     	& {\bf Example 2} \\ 
			\hline
			$a_{1}$	 or $a_{2}$  &$[-3,3]$  				  	& $[0,3.5]$ \\
			$\lambda$  				   &$10^{-3}$			      	&  $10^{-3}$ \\
			$\bar{x}$    				  &$3.5$						 	&  $3.5$	 \\
			$\bar{y}$    				  &$3.5$						 	&  $3.5$ \\
			$b$              				  &$1$							   	   &  $1$ \\
			$\mathfrak{f}(x,y)$ &$-y$								&  $-y$ \\
			$k$          					  &$1$							   	   &  $1$ \\
			$\alpha$					 &$0.5$								&  $0.5$ \\
			$\sigma$					&$1$							 	 &  $1$ \\
			$\delta_{\tilde{x}}$			  &$5.47\times 10^{-5}$					 	 &  $5.47\times 10^{-5}$ \\
			$\delta_{\tilde{y}}$			  &$5.47\times 10^{-5}$				         &  $5.47\times 10^{-5}$ \\
			$\delta_{\tilde{z}}$			  &$1.56\times 10^{-5}$					     &  $1.56\times 10^{-5}$ \\
			$IJK$						   &$2.15\times 10^{6}$	& $2.15\times 10^{6}$ \\ \hline\hline
		\end{tabular}
\end{center}}

\vspace{0.5em}
In Example  1, the Dirac delta function $\delta(x-a_{1})$ is approximated by 
$\frac{1}{\sqrt{2\pi}\varepsilon_{0}}\exp\left(-\frac{(x-a_{1})^{2}}{2\varepsilon_{0}^{2}}\right)$, 
with $\varepsilon_{0} = \lambda\bar{x}/2^6$. 

\vspace{1em}
\noindent{\bf{Monte Carlo approach}}
\vspace{1em}

{\renewcommand
	\arraystretch{1.2}
	\tabcolsep = 3em
	\begin{center}
		\begin{tabular}{c|cc}
			\hline\hline
			\centering 
			{\bf{Parameters}}  &{\bf{Example 1}}  & {\bf Example 2} \\ 
			\hline
			$a_{1}$	 or $a_{2}$  &$[-3,3]$  				  	& $[0,3.5]$ \\
			$\mathbf{x}$  &$(0,0,0)$  				  	& $(0,0,0)$ \\
			$b$              				  &$1$							   	   &  $1$ \\
			$\mathfrak{f}(x,y)$ &$-y$								&  $-y$ \\
			$k$          					  &$1$							   	   &  $1$ \\
			$\alpha$					 &$0.5$								&  $0.5$ \\
			$\sigma$					&$1$							 	 &  $1$ \\
			$\delta_{t} $						 &$10^{-3}$             &$10^{-3}$          \\
			$T$							 &$10^{9}$                &$10^{9}$           \\ \hline\hline
		\end{tabular}
\end{center}}

\bibliographystyle{unsrt}
\bibliography{Ref}   

\end{document}